\newtheorem{theorem}{Theorem}[section]
\newtheorem{lemma}[theorem]{Lemma}
\newtheorem{proposition}[theorem]{Proposition}
\newtheorem{remark}[theorem]{Remark}
\newtheorem{conjecture}[theorem]{Conjecture}
\DeclareMathOperator{\ind}{ind}
\DeclareMathOperator{\wn}{Wind}
\DeclareMathOperator{\dom}{dom}
\newcommand{\strong}{\mathop{\rm s\mathchar`-lim}\limits}
\def\C{\mathbb C}
\def\NN{\mathbb N}
\def\R{\mathbb R}
\def\Z{\mathbb Z}
\def\B{\mathscr B}
\def\d{\mathrm d}
\def\e{\mathrm e}
\def\EE{\mathscr E}
\def\JJ{\mathscr J}
\def\F{\mathcal F}
\def\HH{\mathscr H}
\def\H{\mathsf H}
\def\J{\mathcal J}
\def\K{\mathscr K}
\def\T{\mathbb T}
\def\U{\mathcal U}
\def\O{\mathcal O}
\def\L{\mathcal L}
\def\M{\mathcal M}
\def\N{\mathcal N}
\def\W{\mathcal W}
\def\sD{\mathsf D}
\def\SL{\mathrm{SL}}
\def\bbbone{{\mathchoice {\rm 1\mskip-4mu l} {\rm 1\mskip-4mu l}
{\rm 1\mskip-4.5mu l} {\rm 1\mskip-5mu l}}}
\def\I{\bbbone}
\begin{document}

\title{Scattering theory and an index theorem \\ on the radial part of $\mathrm{SL}(2,\R)$}
\author{H. Inoue${}^1$, S. Richard${}^{2,1,}$\footnote{S.~R. is supported by JSPS Grant-in-Aid for scientific research C no 21K03292.}}

\date{}
\maketitle \vspace{-1cm}
\vspace{-5mm}
\begin{quote}
\emph{
\begin{itemize}
\item[1] Universit\'e de Lyon, Universit\'e Claude Bernard Lyon 1,
CNRS UMR 5208, Institut Camille Jordan, 43 blvd. du 11 novembre 1918, F-69622
Villeurbanne cedex, France
\item[2] Institute of Liberal Arts and Sciences, Nagoya University,
Chikusa-ku, Nagoya 464-8602, Japan
\item[] \emph{E-mail:} inoue.hideki124@gmail.com, richard@math.nagoya-u.ac.jp
\end{itemize}
}
\end{quote}

\maketitle

\begin{abstract}
We present the spectral and scattering theory of the Casimir operator acting on the radial part of $\mathrm{SL}(2,\R)$. After a suitable decomposition, these investigations consist in studying a family of differential operators acting on the half-line. For these operators, explicit expressions can be found for the resolvent, for the spectral density, and for the Moeller wave operators, in terms of the Gauss hypergeometric function. An index theorem is also introduced and discussed. 
The resulting equality links various asymptotic behaviors of the hypergeometric function.
\end{abstract}

%In this talk, we present the spectral and scattering theory of the Casimir operator acting on the radial part of SL(2,R). After a suitable decomposition, the initial problem consists in studying a family of differential operators acting on the half-line. For these operators, explicit expressions can be found for the resolvent, the spectral density, and the Moeller wave operators, in terms of hypergeometric functions. Finally, an index theorem is introduced and discussed. This presentation is based on a joint work with H. Inoue.

\noindent
\textbf{2010 Mathematics Subject Classification:} 33C80, 34L25, 81U15

\noindent
\textbf{Keywords:} Hypergeometric function, SL(2,R), index theorem, solvable model 
\tableofcontents

\section{Introduction}

In this paper, we provide a self-contained presentation of the spectral and scattering theory for the family of differential expressions
\begin{equation}\label{eq_main_op}
D_{\mu,\nu}:=-\frac{\mathrm{d}^2}{\mathrm{d} x^2}+\left(\mu^2-\frac{1}{4}\right)\frac{1}{\sinh(x)^2\cosh(x)^2}+(\mu^2-\nu^2)\frac{1}{\cosh(x)^2},
\end{equation}
with $\mu,\nu\geq 0$. 
These operators appear naturally via the Cartan decomposition of the Casimir operator of $\mathfrak{sl}(2,\R)$ acting on $L^2\big(\mathrm{SL}(2,\R)\big)$, see \cite[Sect.~8.1]{V} and Appendix \ref{sec_reduction}.
Self-adjoint realizations of these operators in the Hilbert space $L^2(\R_+)$ are obtained by prescribing suitable boundary conditions at $x=0$.
The analysis of \eqref{eq_main_op}  relies on properties of the Gauss hypergeometric function  ${}_2F_1$, and several asymptotic behaviors of this function are involved in our study.
An index theorem, linking some of these asymptotic expressions, is also provided. Up to the best of our knowledge, and despite a huge literature on the Gauss hypergeometric function, it seems that this index theorem went unnoticed so far.

This work can be approached from different angles:
\begin{enumerate}
\item The integrable models' perspective: The operator \eqref{eq_main_op} is one member of a larger family of solvable models, as presented for example in \cite{GTV}. For this model, we provide the full spectral and scattering theory.
\item The special functions' perspective: The analysis is heavily relying on Gauss hypergeometric functions, and several asymptotic behaviors of these functions are recast in the framework of scattering theory. For instance, the uniform asymptotic expansion of these functions with respect to large parameters, as recently achieved in \cite{KO}, is playing a central role in our work.
\item The Levinson's perspective: For the last 15 years, it has been shown that an equality discovered by N.~Levinson in \cite{Lev} and relating the number of eigenvalues of a differential equation to its scattering part, corresponds to an index theorem in scattering theory \cite{Ri}. For self-adjoint
realizations of \eqref{eq_main_op}, the equality between the number of eigenvalues and a suitable winding number can be computed explicitly, see Section \ref{sec_index}.
\item The Lie groups' perspective: The family of operators \eqref{eq_main_op} is obtained by reduction of the Casimir operator
acting on $L^2\big(\mathrm{SL}(2,\R)\big)$ to some invariant subspaces. Thus, this work also corresponds to the first attempt to consider Levinson's theorem in the framework
of semisimple Lie group, a research topic triggered by discussions with N.~Higson, see also \cite{HT}. 
\end{enumerate}

Clearly, none of these subjects is new, but combining all of them together seems unique.
For example, scattering theory on symmetric spaces has been developed by several authors, see for example \cite{Ka, S1, S2} and references therein.  
In the physics literature, we also mention \cite{K1, K2} in which a link is established between
scattering theory and representation theory of semisimple Lie group.
Among a huge literature linking representation theory and special functions, let us just mention \cite{K} which has been a source of inspiration for our investigations. 
Finally, the role of $C^*$-algebras for a topological version of Levinson's theorem has been initially exhibited in \cite{KR, KR08} and a review presentation is provided in \cite{Ri}.

Let us now describe the content of this paper.
In Section \ref{sec_model} we introduce more precisely the differential expressions $D_{\mu,\nu}$ and  endow them with a boundary condition such that they become self-adjoint 
operators $H_{\mu,\nu}$ in $L^2(\R_+)$.
The study of the equation $D_{\mu,\nu}f=-\zeta^2 f$ for $\zeta\in\C\setminus \R$ with $\Re(\zeta)>0$ is carried out in Section \ref{sec_resolv}. The solutions of this equation involve
 Gauss hypergeometric functions. Based on these solutions, the resolvent 
$R_{\mu,\nu}(-\zeta^2):=(H_{\mu,\nu}+\zeta^2)^{-1}$
of the operator $H_{\mu,\nu}$ is computed and information about the point spectrum of 
$H_{\mu,\nu}$ is provided.
In Section \ref{sec_lap},  we show the existence of boundary values $R_{\mu,\nu}(k^2\pm i0)$ for $k>0$ as bounded operators between appropriate weighted Hilbert spaces. 
Such a result is often referred to as a limiting absorption principle in spectral theory.
In terms of the boundary values, one then infers that the spectral density
$$
p_{\mu,\nu}(k^2):=\frac{1}{2\pi i}\big(R_{\mu,\nu}(k^2+i0)-R_{\mu,\nu}(k^2-i0)\big),\qquad k>0 
$$
is well-defined and the spectrum of $H_{\mu,\nu}$ is purely absolutely continuous 
on $(0,\infty)$. 

The content of the previous three sections corresponds to spectral information, let us move
to scattering results. 
In Section \ref{sec_Fourier} we introduce the generalized Fourier kernels $\F^\pm_{\mu,\nu}(x,k)$, 
for $x,k \in \R_+$, and study some of their properties.
These kernels are expressed in terms of the hypergeometric function together with some normalizing factors. Various asymptotic behaviors of these kernels are either computed or recalled.
These kernels define the generalized Fourier transforms $\F_{\mu,\nu}^\pm$ studied in Section \ref{sec_Scat}.
Note that these operators can be used for proving Plancherel theorem, as exhibited in 
\cite[Thm.~2.3 \& Thm.~2.4]{K}. In our framework, the operators $\F^\pm_{\mu,\nu}$
are needed for defining the M\o ller wave operators 
$W_\pm(H_{\mu,\nu},H_{\rm D}) := \big(\F_{\mu, \nu}^{\pm}\big)^*\;\! \F_{\rm D}$, 
where $\F_{\mathrm D}$ denotes the sine transform in $L^2(\R_+)$.
These operators provides an intertwining relation between the two operators $H_{\mu,\nu}$ and
the Dirichlet Laplacian $H_{\rm D}$ in $L^2(\R_+)$, namely
$$
W_\pm(H_{\mu,\nu},H_{\rm D}) H_{\rm D} = H_{\mu,\nu} W_\pm(H_{\mu,\nu},H_{\rm D}).
$$
The equality of these wave operators with their time dependent analog is proved.
As a by-product of this construction, the scattering operator $S_{\mu,\nu}$ is also defined.
This unitary operator involves the product of four gamma functions and their complex conjugates.
 
The last section is more of a $C^*$-algebraic nature. We introduce a $C^*$-algebra
related to the wave operators, and show how this algebraic formalism leads us to the definition
of a continuous function defined on the edges of a square. This function is unitary-valued, and contains the scattering operator, among other contributions. The explicit computation of its winding number and its equality with the eigenvalues of $H_{\mu,\nu}$ correspond to the new index theorem.
However, due to the complicated asymptotic behaviors of the hypergeometric function, one key but technical result has not been obtained, see Conjecture \ref{conjecture}. Fortunately, it does not impact any other result provided in this paper (even the index theorem), but it would certainly be more satisfactory to prove this affiliation statement. Nevertheless, the scattering theory part,  the $C^*$-algebraic framework, and the properties of the hypergeometric function, complement and stimulate each other.

In two appendices, we finally provide a few additional computations. The first one is related to the reduction of the Casimir operator to the differential expression \eqref{eq_main_op} in $L^2(\R_+)$.
The second one is a detailed computation for relating the differential equation $D_{\mu,\nu}f=-\zeta^2 f$  to the hypergeometric equation 
\eqref{hypergeometric}. Through this computation, the hypergeometric functions appear naturally in our investigations.

This work is centered on the Lie group $\mathrm{SL}(2,\R)$.
It would be of great interest to further extend these investigations to other semisimple Lie groups. 
Also, from the point of view of representation theory, the link between the intertwining operators, introduced in \cite{KS} and further studied in \cite{Sa} for  $\mathrm{SL}(2,\R)$, 
and our generalized Fourier transforms $\F_{\mu,\nu}^\pm$ should be further investigated.
We hope that this initial work will trigger further projects at the interplay between group theory,
special functions, $C^*$-algebras, and scattering theory. 

{\bf Notations:}
We set $\R_+:=(0,\infty)$, $\NN:=\{0,1,2,3,\dots\}$, and let $C_{\rm c}^\infty(\R_+)$ stand
for the set of smooth and compactly supported functions on $\R_+$.

%%%%%%%%%%%%%%%%%%%%%%%%%%%%%%%%%%%%
\section{The self-adjoint realization}\label{sec_model}
%%%%%%%%%%%%%%%%%%%%%%%%%%%%%%%%%%%%

For any $\mu,\nu\geq0$ we consider the function $V_{\mu,\nu}:\R_+\to \R$ defined 
for $x\in \R_+$ by
$$
V_{\mu,\nu}(x):=\left(\mu^2-\frac{1}{4}\right)\frac{1}{\sinh(x)^2\cosh(x)^2}+\big(\mu^2-\nu^2\big)\frac{1}{\cosh(x)^2},
$$
and the differential operator
$$
D_{\mu,\nu}:=-\frac{\d^2}{\d x^2}+V_{\mu,\nu}(X)
$$
with domain $\dom(D_{\mu,\nu}):=C_{\rm c}^\infty(\R_+)$. 
Here, $V_{\mu,\nu}(X)$ denotes  the multiplication operator by the function $V_{\mu,\nu}$.
We also consider the auxiliary operator  $D_{\mu}$ defined by 
\begin{equation}\label{eq:Vmu}
D_{\mu}:=-\frac{\d^2}{\d x^2}+V_{\mu}(X),
\end{equation}
with $\dom(D_{\mu}):=C_{\rm c}^\infty(\R_+)$ and with $V_\mu:\R_+\to \R$ given 
for $x\in \R_+$ by
$$
V_{\mu}(x):=\left(\mu^2-\frac{1}{4}\right)\frac{1}{x^2}.
$$ 
Since $V_{\mu,\nu}-V_\mu$ corresponds to a bounded function on $\R_+$, 
there exists a one-to-one correspondence
between closed extensions of $D_{\mu,\nu}$ in $L^2(\R_+)$ and closed extensions 
of $D_{\mu}$ in $L^2(\R_+)$. 
In particular, the same one-to-one correspondence holds for self-adjoint extensions.

In \cite{DR}, closed extensions of $D_{\mu}$ have been extensively studied, 
for any $\mu\in \C$ with $\Re(\mu)>-1$. These investigations led to two families of closed operators in $L^2(\R_+)$, each of them corresponding to a specific boundary condition
at $0$.
Clearly, similar families of operators could be defined for $D_{\mu,\nu}$, and the full
program of \cite{DR} could be repeated in the current framework.
However, since our goal is different, we shall consider here only one distinguished
self-adjoint realization. We recall its construction for the auxiliary operator $D_\mu$, 
since the addition of $V_{\mu,\nu}(X)-V_\mu(X)$ does not change the domain of the operator.

For $\mu\geq 0$ the self-adjoint extension $H_{\mu}$ of $D_\mu$ is constructed as follows: 
Let $D_{\mu}^{\min}$ denote the minimal operator associated with $D_\mu$, 
namely the closure of $D_{\mu}$, and let $D_{\mu}^{\max}$ be the maximal operator, with domain
\begin{align*}
\dom(D_{\mu}^{\max}):=\Big\{f\in L^2(\R_+)\mid D_\mu f\in L^2(\R_+)\Big\}.
\end{align*}
These operators satisfy $\big(D_\mu^{\min}\big)^*=D_\mu^{\max}$.
For $g:\R_+\to\C$, we say that \emph{$g(x)\in\dom(D_{\mu}^{\min})$ near $0$} if there exists $\chi\in C_{\rm c}^{\infty}\big([0,\infty)\big)$ with $\chi(0)=1$ such that $\chi g\in\dom(D_{\mu}^{\min})$. The operator $H_{\mu}$ is then defined as the restriction of $D_{\mu}^{\max}$ to
\begin{equation}\label{eq_dom}
\dom(H_{\mu}):=\left\{ f\in\dom(D_{\mu}^{\max})\mid \exists c\in\C\ \text{s.t.}\ f(x)-cx^{\frac{1}{2}+\mu}\in\dom (D_{\mu}^{\min})\ \text{near}\ 0 \right\}.
\end{equation}
The resulting operator $H_\mu$ is self-adjoint, and corresponds in \cite{DR} to the operator 
$H_{\mu,0}$, and to $H_{0}^\infty$ in the special case $\mu=0$, 
see also \cite{BDG} for an earlier construction in line with our notations. 
Note also that $H_{\frac{1}{2}}=:H_{\rm D}$ coincides with the Dirichlet Laplacian on $\R_+$.

Based on this construction, we now define the self-adjoint operator of interest, namely 
$H_{\mu,\nu}$. This operator can either be constructed as the operator $H_\mu$, or defined by setting 
$$
H_{\mu,\nu}:=H_\mu + V_{\mu,\nu}(X)-V_\mu(X)
$$ 
with domain $\dom(H_{\mu,\nu}):=\dom(H_{\mu})$.
In particular, since $\dom (D_{\mu,\nu}^{\max})=\dom (D_{\mu}^{\max})$
and $\dom(D_{\mu,\nu}^{\min})=\dom(D_{\mu}^{\min})$, it follows that the elements of $\dom(H_{\mu,\nu})$ behave near $0$ as prescribed in \eqref{eq_dom}.

%%%%%%%%%%%%%%%%%%%%%%%%%%%%%%%%%%%%
\section{Resolvent and spectral properties}\label{sec_resolv}
%%%%%%%%%%%%%%%%%%%%%%%%%%%%%%%%%%%%

For fixed $\zeta\in\C\setminus \R$ with $\Re(\zeta)>0$, we consider the Schr\"odinger equation
\begin{equation}\label{schrodinger}
-u''(x)+V_{\mu,\nu}(x)u(x)=-\zeta^2 u(x),\qquad x\in\R_+.
\end{equation}
By setting $z:=\tanh(x)^2$ and $u(x):=z^{\frac{1}{4}+\frac{\mu}{2}}(1-z)^{\frac{\zeta}{2}}v(z)$, this differential equation for $u$ can be converted into the following hypergeometric equation for $v$, see \cite[Sect.~8.10]{GTV}:
\begin{equation}\label{hypergeometric}
z(1-z)v''(z)+\big\{1+\mu-\big(1+(\alpha+\zeta/2)+(\beta+\zeta/2)\big)z\big\}v'(z)-(\alpha+ \zeta/2)(\beta+\zeta/2) v(z)=0,
\end{equation}
where 
\begin{equation*}
\alpha:=\frac{1+\mu+\nu}{2}\quad \hbox{and}\quad \beta:=\frac{1+\mu-\nu}{2}.
\end{equation*}
For completeness, the explicit computations are provided in Appendix \ref{sec_app_A}. 
Then, by using \cite[15.3.3]{AS} for the second equality, we get a first solution to \eqref{schrodinger}, namely
\begin{align}
\nonumber L_{\mu,\nu}(x,\zeta)&:=z^{\frac{1}{4}+\frac{\mu}{2}}(1-z)^{\frac{\zeta}{2}}F\big(\alpha+\zeta/2,\beta+\zeta/2;1+\mu;z\big)\\
\nonumber &=\tanh(x)^{\frac{1}{2}+\mu}\cosh(x)^{-\zeta}F\big(\alpha+\zeta/2,\beta+\zeta/2;1+\mu;\tanh(x)^2\big)  \\
&=\tanh(x)^{\frac{1}{2}+\mu}\cosh(x)^{\zeta}F\big(\alpha-\zeta/2,\beta-\zeta/2;1+\mu;\tanh(x)^2\big), \label{eq_L_2}
\end{align}
where $F\equiv {}_2F_1$ is the Gauss hypergeometric function defined by
\begin{equation*}
F(a,b;c;z):=\sum_{n=0}^{\infty}\frac{(a)_n (b)_n}{(c)_n}\frac{z^n}{n!}=\frac{\Gamma(c)}{\Gamma(a)\Gamma(b)}\sum_{n=0}^{\infty}\frac{\Gamma(a+n)\Gamma(b+n)}{\Gamma(c+n)}\frac{z^n}{n!}
\end{equation*}
for $|z|<1$, $a,b\in\C$ and $c\in\C\setminus\{0,-1,-2,\ldots\}$.  Here, $(q)_n$ for $q\in\C$ stands for the Pochhammer's symbol defined by  
\begin{align*}
(q)_n:=\begin{cases}
1 & \text{if}\ n=0,\\
q(q+1)\cdots(q+n-1) & \text{if}\ n>0.
\end{cases}
\end{align*}

In order to get a second solution, let us observe that if we set $w:=1-z$ and $s(w):=v(1-w)$, then one obtains from \eqref{hypergeometric}
\begin{equation*}
w(1-w)s''(w)+\big\{1+\zeta-\big(1+(\alpha+\zeta/2)+(\beta+\zeta/2)\big)w\big\}s'(w)-(\alpha+\zeta/2)(\beta+\zeta/2) s(w)=0,
\end{equation*}
where we have used the equality
$\alpha+\zeta/2+\beta+\zeta/2-(1+\mu)+1=1+\zeta$. 
Hence, we get the following second solution of the equation \eqref{schrodinger}:
\begin{align}
\nonumber M_{\mu,\nu}(x,\zeta)&:=z^{\frac{1}{4}+\frac{\mu}{2}}(1-z)^{\frac{\zeta}{2}}F\big(\alpha+\zeta/2,\beta+\zeta/2;1+\zeta;1-z\big)\\
\nonumber &=\tanh(x)^{\frac{1}{2}+\mu}\cosh(x)^{-\zeta}F\big(\alpha+\zeta/2,\beta+\zeta/2;1+\zeta;\cosh(x)^{-2}\big) \\
&=\tanh(x)^{\frac{1}{2}-\mu}\cosh(x)^{-\zeta}F\bigl(1-\alpha+\zeta/2,1-\beta+\zeta/2;1+\zeta;\cosh(x)^{-2}\bigr), \label{eq_M_2}
\end{align}
where \cite[15.3.3]{AS} has been used again for the second equality.

Let us now note that $L_{\mu,\nu}(x,\zeta)=x^{1/2+\mu}+O\left(x^{5/2+\mu}\right)$ as $x\searrow 0$, from which one infers that $L_{\mu,\nu}(\cdot,\zeta)$ belongs to $\dom(H_{\mu,\nu})$ near $0$. We also have $M_{\mu,\nu}(x,\zeta)=2^{\zeta}\e^{-\zeta x}\big(1+O\left(\e^{-2x}\right)\big)$ as $x\to\infty$. Then, by using the linear transformation formula \cite[15.3.6]{AS}
\begin{align}\label{eq_1536}
\begin{split}
F(a,b;c;z)&=\frac{\Gamma(c)\Gamma(c-a-b)}{\Gamma(c-a)\Gamma(c-b)}F(a,b;a+b-c+1;1-z)\\
&\qquad+(1-z)^{c-a-b}\frac{\Gamma(c)\Gamma(a+b-c)}{\Gamma(a)\Gamma(b)}F(c-a,c-b;c-a-b+1;1-z)
\end{split}
\end{align}
one obtains
\begin{equation}\label{eq_lr}
L_{\mu,\nu}(x,\zeta)=\frac{\Gamma(1+\mu)\Gamma(-\zeta)}{\Gamma(\alpha-\zeta/2)\Gamma(\beta-\zeta/2)}M_{\mu,\nu}(x,\zeta)+\frac{\Gamma(1+\mu)\Gamma(\zeta)}{\Gamma(\alpha+\zeta/2)\Gamma(\beta+\zeta/2)}N_{\mu,\nu}(x,\zeta),
\end{equation}
where 
\begin{align*}
N_{\mu,\nu}(x,\zeta)&:=z^{\frac{1}{4}+\frac{\mu}{2}}(1-z)^{-\frac{\zeta}{2}}F\big(\alpha-\zeta/2,\beta-\zeta/2;1-\zeta;1-z\big)\\
&=\tanh(x)^{\frac{1}{2}+\mu}\cosh(x)^{\zeta}F\big(\alpha-\zeta/2,\beta-\zeta/2;1-\zeta;\cosh(x)^{-2}\big) \\
& = \tanh(x)^{\frac{1}{2}-\mu}\cosh(x)^{\zeta}F\big(1-\alpha-\zeta/2,1-\beta-\zeta/2;1-\zeta;\cosh(x)^{-2}\big).
\end{align*}
Once again, \cite[15.3.3]{AS} has been used for the second
equality. 
It is then easily observed that $N_{\mu,\nu}(x,\zeta)=2^{-\zeta}\e^{\zeta x}\big(1+O\left(\e^{-2x}\right)\big)$ as $x\to\infty$.
From this estimate one infers that the Wronskian\footnote{The Wronskian for the solutions $f_1,f_2:\R_+\to\C$ of a second order ordinary differential equation is defined by $\{f_1,f_2\}:=f_1f_2'-f_1'f_2$.} of $L_{\mu,\nu}(\cdot,\zeta)$ and $M_{\mu,\nu}(\cdot,\zeta)$ is given by
\begin{equation}\label{wronskian}
W_{\mu,\nu}(\zeta):=\frac{\Gamma(1+\mu)\Gamma(\zeta)}{\Gamma(\alpha+\zeta/2)\Gamma(\beta+\zeta/2)}\times (-2\zeta)
=-\frac{2\Gamma(1+\mu)\Gamma(1+\zeta)}{\Gamma(\alpha+\zeta/2)\Gamma(\beta+\zeta/2)}.
\end{equation}

The next statement is about the resolvent of the self-adjoint operator $H_{\mu,\nu}$.

\begin{lemma}[Resolvent]
For fixed $\zeta\in\C\setminus\R$ with $\Re(\zeta)>0$,
the kernel of the resolvent $R_{\mu,\nu}(-\zeta^2):=(H_{\mu,\nu}+\zeta^2)^{-1}$ is given by
\begin{align*}
R_{\mu,\nu}(-\zeta^2;x,y)=-\frac{1}{W_{\mu,\nu}(\zeta)}\begin{cases}
L_{\mu,\nu}(x,\zeta)M_{\mu,\nu}(y,\zeta) & \text{if}\ 0<x<y\\
L_{\mu,\nu}(y,\zeta)M_{\mu,\nu}(x,\zeta) & \text{if}\ 0<y<x.
\end{cases}
\end{align*}
Moreover, the following estimate holds: for $\mu>0$
\begin{equation}\label{estimate of resolvent case1}
|R_{\mu,\nu}(-\zeta^2;x,y)|\leq C_{\mu,\nu}(\zeta)|W_{\mu,\nu}(\zeta)|^{-1}\tanh(x)^{\frac{1}{2}}\tanh(y)^{\frac{1}{2}}\e^{-\Re(\zeta)|x-y|},
\end{equation}
and for $\mu=0$
\begin{align}\label{estimate of resolvent case2}
\begin{split}
|R_{0,\nu}(-\zeta^2;x,y)|\leq  & C_{0,\nu}(\zeta)|W_{0,\nu}(\zeta)|^{-1}\tanh(x)^{\frac{1}{2}}\tanh(y)^{\frac{1}{2}} \\
& \times \big(1+\big|\ln\bigl(\tanh(x)\bigr)\big|\big)\big(1+\big|\ln\bigl(\tanh(y)\bigr)\big|\big)\e^{-\Re(\zeta)|x-y|},
\end{split}
\end{align}
for some constant $C_{\mu,\nu}(\zeta)>0$ independent of $x$.
\end{lemma}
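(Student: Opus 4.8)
The plan is to treat the two assertions in turn: first the Green's-function representation of the resolvent, then the pointwise estimates.

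For the kernel formula I would argue as in standard Sturm--Liouville theory. Since $\zeta\in\C\setminus\R$ with $\Re(\zeta)>0$, the number $-\zeta^2$ has non-zero imaginary part and hence lies in the resolvent set of the self-adjoint operator $H_{\mu,\nu}$. The two solutions of \eqref{schrodinger} already produced are precisely the ones adapted to the two endpoints: the expansion $L_{\mu,\nu}(x,\zeta)=x^{1/2+\mu}+O(x^{5/2+\mu})$ shows that $L_{\mu,\nu}(\cdot,\zeta)\in\dom(H_{\mu,\nu})$ near $0$, while $M_{\mu,\nu}(x,\zeta)=2^{\zeta}\e^{-\zeta x}(1+O(\e^{-2x}))$ with $\Re(\zeta)>0$ shows that $M_{\mu,\nu}(\cdot,\zeta)$ is square integrable near $+\infty$. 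These solutions are linearly independent because the Wronskian $W_{\mu,\nu}(\zeta)$ in \eqref{wronskian} is non-zero: its numerator $\Gamma(1+\mu)\Gamma(1+\zeta)$ has no pole for $\Re(\zeta)>0$ and $\mu\geq0$, while $1/\Gamma(\alpha+\zeta/2)$ and $1/\Gamma(\beta+\zeta/2)$ cannot vanish since $\alpha+\zeta/2$ and $\beta+\zeta/2$ are non-real. Consequently the candidate kernel $-W_{\mu,\nu}(\zeta)^{-1}L_{\mu,\nu}(x_<,\zeta)M_{\mu,\nu}(x_>,\zeta)$, with $x_<=\min(x,y)$ and $x_>=\max(x,y)$, solves $(D_{\mu,\nu}+\zeta^2)_x R_{\mu,\nu}(-\zeta^2;x,y)=\delta(x-y)$; the normalization by $-W_{\mu,\nu}(\zeta)$ comes from the jump relation for $-\mathrm{d}^2/\mathrm{d}x^2$ together with the Wronskian convention $\{f_1,f_2\}=f_1f_2'-f_1'f_2$. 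Since the associated integral operator maps $L^2(\R_+)$ into $\dom(H_{\mu,\nu})$ and inverts $H_{\mu,\nu}+\zeta^2$, it is the resolvent, which gives the stated formula.

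The estimates reduce to separate pointwise bounds on $L_{\mu,\nu}$ and $M_{\mu,\nu}$. The key observation is to use, among the equivalent representations in \eqref{eq_L_2} and \eqref{eq_M_2}, the one in which the hypergeometric argument tends to $1$ through a convergent case. For $L_{\mu,\nu}$ I would use the last line of \eqref{eq_L_2}, namely $\tanh(x)^{\frac12+\mu}\cosh(x)^{\zeta}F(\alpha-\zeta/2,\beta-\zeta/2;1+\mu;\tanh(x)^2)$: here $c-a-b=\zeta$ has positive real part, so by Gauss' theorem $F$ extends continuously to $[0,1]$ and is therefore bounded there. Using $\tfrac12\e^{x}\leq\cosh(x)\leq\e^{x}$ this yields $|L_{\mu,\nu}(x,\zeta)|\leq C\,\tanh(x)^{\frac12+\mu}\e^{\Re(\zeta)x}$. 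For $M_{\mu,\nu}$ I would use the last line of \eqref{eq_M_2}, $\tanh(x)^{\frac12-\mu}\cosh(x)^{-\zeta}F(1-\alpha+\zeta/2,1-\beta+\zeta/2;1+\zeta;\cosh(x)^{-2})$, for which $c-a-b=\mu$; when $\mu>0$ the same argument gives $|M_{\mu,\nu}(x,\zeta)|\leq C\,\tanh(x)^{\frac12-\mu}\e^{-\Re(\zeta)x}$.

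It then remains to recombine. For $0<x<y$ the kernel equals $-W_{\mu,\nu}(\zeta)^{-1}L_{\mu,\nu}(x,\zeta)M_{\mu,\nu}(y,\zeta)$, and the bounds above produce the factor $\tanh(x)^{\frac12+\mu}\tanh(y)^{\frac12-\mu}=\tanh(x)^{\frac12}\tanh(y)^{\frac12}\bigl(\tanh(x)/\tanh(y)\bigr)^{\mu}$, in which the last factor is $\leq1$ since $\tanh$ is increasing and $x<y$; the exponentials combine to $\e^{-\Re(\zeta)(y-x)}=\e^{-\Re(\zeta)|x-y|}$. This is exactly \eqref{estimate of resolvent case1}, and the case $0<y<x$ follows by the symmetry of the kernel. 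For $\mu=0$ the bound on $L_{0,\nu}$ is unchanged, but the representation of $M_{0,\nu}$ now falls in the borderline case $c-a-b=0$, where the hypergeometric function has a logarithmic singularity at argument $1$; using the connection formula \cite[15.3.10]{AS} together with $1-\cosh(x)^{-2}=\tanh(x)^2$ one obtains $|M_{0,\nu}(x,\zeta)|\leq C\,\tanh(x)^{\frac12}(1+|\ln(\tanh(x))|)\e^{-\Re(\zeta)x}$, and recombining as before (bounding the logarithmic factor of the smaller variable by $1+|\ln(\tanh(\cdot))|\geq1$ to symmetrize) yields \eqref{estimate of resolvent case2}. The main obstacle throughout is the uniform control of $F$ up to the singular endpoint $z=1$, and in particular the correct extraction of the logarithmic degeneration at $\mu=0$; everything else is bookkeeping with the connection formula \eqref{eq_1536} and the elementary bounds on $\cosh$.
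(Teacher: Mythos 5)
Your proposal is correct and follows essentially the same route as the paper: the kernel formula via the standard Sturm--Liouville/Green's function argument (the paper simply cites \cite[Thm.~7.10.(2)]{DG}), and the estimates via the representations \eqref{eq_L_2} and \eqref{eq_M_2} with the hypergeometric factor bounded near argument $1$ (logarithmically for $\mu=0$ via \cite[15.3.10]{AS}), recombined using the monotonicity of $\tanh$ exactly as in the paper.
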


\begin{proof}
The first statement is a classical result, see for example \cite[Thm.~7.10.(2)]{DG}.
Note that the following properties play an essential role in the argument:
The function $L_{\mu,\nu}(\cdot,\zeta)$ 
belongs to the domain of $H_{\mu,\nu}$ near $0$, while the function
$M_{\mu,\nu}(\cdot,\zeta)$ belongs to $L^2$ near infinity.

For the second statement, let us first come back to \eqref{eq_L_2}.
Since the function 
$$
x\mapsto F\big(\alpha-\zeta/2,\beta-\zeta/2;1+\mu;\tanh(x)^2\big)
$$ 
is continuous and bounded on $\R_+$, there exists $c_{\mu,\nu}(\zeta)>0$ independent of $x$ such that
\begin{equation}\label{est_L}
|L_{\mu,\nu}(x,\zeta)|\leq c_{\mu,\nu}(\zeta)\tanh(x)^{\frac{1}2{}+\mu}\e^{\Re(\zeta)x},\qquad x>0.
\end{equation}
Note that a similar argument holds for \eqref{eq_M_2} whenever $\mu>0$ because the function  
$$
x\mapsto F\bigl(1-\alpha+\zeta/2,1-\beta+\zeta/2;1+\zeta;\cosh(x)^{-2}\bigr)
$$
is also continuous and bounded on $\R_+$. 
When $\mu=0$ one can use \cite[15.3.10]{AS} for the function
$$ 
x\mapsto F\bigl((1-\nu)/2+\zeta/2,(1+\nu)/2+\zeta/2;1+\zeta;\cosh(x)^{-2}\bigr),
$$
and infer that
\begin{equation}\label{est_M}
|M_{\mu,\nu}(x,\zeta)|\leq c'_{\mu,\nu}(\zeta)\begin{cases}
\tanh(x)^{\frac{1}{2}-\mu}\e^{-\Re(\zeta)x} & \text{if}\ \mu>0, \\
\tanh(x)^{\frac{1}{2}}\big(1+\big|\ln\bigl(\tanh(x)\bigr)\big|\big)\e^{-\Re(\zeta)x} & \text{if}\ \mu=0,
\end{cases}
\end{equation}
where $c'_{\mu,\nu}(\zeta)$ is independent of $x$.
The estimate in the initial statement is then obtained by taking the two estimates \eqref{est_L} and \eqref{est_M} into account, and the following observation: 
$\frac{\tanh(x)}{\tanh(y)}<1$ if $0<x<y$, while $\frac{\tanh(y)}{\tanh(x)}<1$ if $0<y<x$.
\end{proof}

The next statement is a direct consequence of the above expression for the resolvent.
We denote by $\sigma_{\rm p}(H_{\mu,\nu})$ the set of eigenvalues of $H_{\mu,\nu}$, 
and recall that $\NN=\{0,1,2,3,\dots\}$.

\begin{proposition}[Number of bound states]\label{number of bound states}
The number of eigenvalues of $H_{\mu,\nu}$ is given by
\begin{align}\label{number of bound states2}
\#\sigma_{\rm p}(H_{\mu,\nu})=
\begin{cases}
0 & \text{if}\quad 1+\mu> \nu,\\
\left\lceil\frac{\nu-\mu-1}{2}\right\rceil & \text{if}\quad 1+\mu\leq\nu,
\end{cases}
\end{align}
where $\lceil \cdot\rceil$ stands for the ceiling function defined by $\lceil t\rceil:=\min\{m\in\Z\mid m\geq t\}$ for $t\in\R$.
\end{proposition}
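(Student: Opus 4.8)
The plan is to identify the eigenvalues of $H_{\mu,\nu}$ with those values $-\zeta^2<0$, parametrized by $\zeta>0$, for which the Schr\"odinger equation \eqref{schrodinger} possesses a nontrivial $L^2(\R_+)$ solution satisfying the boundary condition at $0$, and then to count them via the zeros of the Wronskian $W_{\mu,\nu}(\zeta)$. Since $H_{\mu,\nu}$ is self-adjoint with essential spectrum $[0,\infty)$ (the pure absolute continuity on $(0,\infty)$ being established in Section \ref{sec_lap}), every eigenvalue is strictly negative, hence of the form $-\zeta^2$ with $\zeta>0$. For such $\zeta$, the solution $L_{\mu,\nu}(\cdot,\zeta)$ already lies in $\dom(H_{\mu,\nu})$ near $0$ thanks to its behavior $L_{\mu,\nu}(x,\zeta)=x^{1/2+\mu}+O\left(x^{5/2+\mu}\right)$; therefore $-\zeta^2$ is an eigenvalue exactly when $L_{\mu,\nu}(\cdot,\zeta)$ is in addition square integrable near infinity. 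This is also visible directly from the resolvent kernel $-\tfrac{1}{W_{\mu,\nu}(\zeta)}L_{\mu,\nu}M_{\mu,\nu}$ of the previous lemma, whose poles for $\zeta>0$ are the eigenvalues.

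First I would invoke the connection formula \eqref{eq_lr}, which writes $L_{\mu,\nu}$ as a linear combination of the exponentially decaying solution $M_{\mu,\nu}(x,\zeta)=2^{\zeta}\e^{-\zeta x}(1+O(\e^{-2x}))$ and the exponentially growing solution $N_{\mu,\nu}(x,\zeta)=2^{-\zeta}\e^{\zeta x}(1+O(\e^{-2x}))$. Consequently $L_{\mu,\nu}(\cdot,\zeta)$ is square integrable near infinity if and only if the coefficient of $N_{\mu,\nu}$ vanishes, i.e.\ if and only if $W_{\mu,\nu}(\zeta)=0$; at such a point $L_{\mu,\nu}(\cdot,\zeta)$ is proportional to $M_{\mu,\nu}(\cdot,\zeta)$ and spans a one-dimensional eigenspace. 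Thus the eigenvalues of $H_{\mu,\nu}$ are precisely the numbers $-\zeta^2$ with $\zeta>0$ and $W_{\mu,\nu}(\zeta)=0$.

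The core computation is then to solve $W_{\mu,\nu}(\zeta)=0$ for $\zeta>0$. From \eqref{wronskian},
$$
W_{\mu,\nu}(\zeta)=-\frac{2\,\Gamma(1+\mu)\,\Gamma(1+\zeta)}{\Gamma(\alpha+\zeta/2)\,\Gamma(\beta+\zeta/2)}.
$$
Recalling that the reciprocal Gamma function $1/\Gamma$ is entire with zeros exactly at the non-positive integers, while $\Gamma(1+\mu)$ and $\Gamma(1+\zeta)$ are finite and nonzero for $\mu\geq 0$ and $\zeta>0$, the Wronskian vanishes if and only if $1/\Gamma(\alpha+\zeta/2)=0$ or $1/\Gamma(\beta+\zeta/2)=0$. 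Since $\alpha+\zeta/2=\tfrac12(1+\mu+\nu+\zeta)>0$ for $\mu,\nu\geq0$ and $\zeta>0$, the first factor never contributes, so $W_{\mu,\nu}(\zeta)=0$ precisely when $\beta+\zeta/2=\tfrac12(1+\mu-\nu+\zeta)$ is a non-positive integer $-n$ with $n\in\NN$, that is, when $\zeta=\nu-\mu-1-2n$.

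Finally I would impose $\zeta>0$ and count the admissible indices: these are the $n\in\NN$ with $n<\tfrac{\nu-\mu-1}{2}$, whose number is $0$ when $\nu<1+\mu$ and equals $\big\lceil\tfrac{\nu-\mu-1}{2}\big\rceil$ when $\nu\geq 1+\mu$ (the boundary case $\nu=1+\mu$ being consistent with $\lceil 0\rceil=0$), which is the claimed formula. I expect the main obstacle to lie not in this elementary counting but in justifying rigorously the one-to-one correspondence between the zeros of $W_{\mu,\nu}$ and simple eigenvalues: namely confirming that $L_{\mu,\nu}(\cdot,\zeta)$ does not degenerate to the zero function at these points (which requires care when $\zeta$ is a positive integer, where the connection formula \eqref{eq_lr} acquires logarithmic terms), and that no eigenvalue is hidden at the threshold $0$ or embedded in $(0,\infty)$. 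These points are handled by the self-adjointness of $H_{\mu,\nu}$, the analyticity of $W_{\mu,\nu}$, and the absolute continuity result of Section \ref{sec_lap}.
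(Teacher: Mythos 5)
Your identification of the negative eigenvalues with the zeros of $\zeta\mapsto W_{\mu,\nu}(\zeta)$, the reduction to the condition $\beta+\zeta/2\in-\NN$ via the poles of $1/\Gamma$, and the final count are exactly the paper's argument, and that part of your proposal is correct. However, there is one genuine gap: the exclusion of a zero-energy eigenvalue. You assert that ``every eigenvalue is strictly negative'' on the strength of the limiting absorption principle, but that result only yields purely absolutely continuous spectrum on the \emph{open} half-line $(0,\infty)$; it says nothing about the threshold $0$ itself, and your closing remark that the threshold is ``handled by the self-adjointness of $H_{\mu,\nu}$, the analyticity of $W_{\mu,\nu}$, and the absolute continuity result'' is not correct --- none of these tools reaches $\zeta=0$, where the Wronskian-based argument degenerates (the formula \eqref{eq_lr} involves $\Gamma(\pm\zeta)$ and the resolvent analysis is carried out only for $\Re(\zeta)>0$).

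The paper closes this gap by a dedicated ODE analysis at $\zeta=0$: one must check directly whether $D^{\max}_{\mu,\nu}f=0$ admits a solution in $\dom(H_{\mu,\nu})\cap L^2(\R_+)$. For $\beta\neq 0$ the two solutions $L_{\mu,\nu}(\cdot,0)$ and $M_{\mu,\nu}(\cdot,0)$ are linearly independent and neither is square integrable (the first by \cite[15.3.10]{AS}, the second by its growth at infinity). The delicate case is $\beta=0$ (i.e.\ $\nu=1+\mu$), where these two solutions coincide with $\tanh(\cdot)^{\frac12+\mu}$ and one must produce a second linearly independent solution $O_{\mu,\nu}(\cdot,0)$ from the Kummer-type list of solutions of the hypergeometric equation, with separate treatments for $\mu\notin\NN$ and $\mu\in\NN$, and verify in each case that it fails to be in $L^2$. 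Without this step the boundary case $\nu=1+\mu$ of the formula (where the claimed count is $\lceil 0\rceil=0$) is not justified. This is precisely the kind of threshold resonance/bound-state analysis that cannot be delegated to the limiting absorption principle, so you should add it explicitly.
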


\begin{proof}
Since $H_{\mu,\nu}$ is self-adjoint, its eigenvalues are real. By the limiting absorption principle provided in the next section, we shall infer that the possible eigenvalues are all located in $(-\infty,0]$. 
In $(-\infty,0)$ these eigenvalues are simple poles of the resolvent. 
One infers that they correspond to the simple zeros of the function $\zeta\mapsto W_{\mu,\nu}(\zeta)$. 
Hence, $-\zeta^2$ is an eigenvalue of $H_{\mu,\nu}$ if $\zeta> 0$ and $\beta+\zeta/2$ belongs to $-\NN$.
If $\beta = \frac{1+\mu-\nu}{2}>0$, then the previous condition is never satisfied.
If $\frac{1+\mu-\nu}{2}\leq 0$, then the previous condition is 
satisfied whenever $\frac{\nu-\mu-1}{2}>n$ for some $n\in \NN$.

For $\zeta=0$, the above approach does not hold, and one has to look for
solutions of the equation $D^{\max}_{\mu,\nu}f=0$ with $f \in \dom(H_{\mu,\nu})$.
The two functions $L_{\mu,\nu}(\cdot,0)$ and 
$M_{\mu,\nu}(\cdot,0)$ are solutions of $D^{\max}_{\mu,\nu}f=0$, but none of 
them is in $L^2(\R_+)$.
For the first one, this follows from \cite[15.3.10]{AS}, while for the second
it follows from its asymptotic near $\infty$, as already mentioned before.
When $\beta=0$, these two functions are equal to $\tanh(\cdot)^{\frac{1}{2}+\mu}$,
and we need another linearly independent solution.
For that purpose, observe firstly that $\beta=0$ implies that $\alpha = 1+\mu$.
Thus, we end up considering the cases $7$ (when $\mu \not \in \NN$) and $21$ (when $\mu \in \NN$) of the list of solutions to the hypergeometric differential equation provided by \cite[Sec.~2.2.2]{Erd}.
In the first case, this leads to the second function
\begin{align*}
O_{\mu,\nu}(x,0)&:=z^{\frac{1}{4}-\frac{\mu}{2}}
F(-\mu,1;1-\mu;z) =\tanh(x)^{\frac{1}{2}-\mu}F\bigl(-\mu,1;1-\mu;\tanh(x)^2\bigr)
\end{align*}
while in the second case this leads to the second function
\begin{align*}
O_{\mu,\nu}(x,0)&:=z^{-\frac{3}{4}-\frac{\mu}{2}}
F(1,1+\mu;2+\mu;z^{-1}) =\tanh(x)^{-\frac{3}{2}-\mu}F\bigl(1,1+\mu;2+\mu;\tanh(x)^{-2}\bigr).
\end{align*}
In the first case, when $\mu\not \in \NN$, it again follows from \cite[15.3.10]{AS} that this function is not in $L^2(\R_+)$.
In the second case, when $\mu \in \NN$, one directly observes that this function
is not in $L^2$ near infinity. 
As a consequence, $\zeta=0$ is never an eigenvalue of $H_{\mu,\nu}$.
We then infer from the first paragraph that
\begin{align*}
\sigma_{\rm p}(H_{\mu,\nu})&=\left\{-\zeta^2\mid \zeta:= 
\nu-\mu-1-2n>0 \hbox{ for some } n\in \NN\right\},
\end{align*}
which leads directly to \eqref{number of bound states2}.
\end{proof}

%%%%%%%%%%%%%%%%%%%%%%%%%%%%%%%%%%%%
\section{Limiting absorption principle and spectral density}\label{sec_lap}
%%%%%%%%%%%%%%%%%%%%%%%%%%%%%%%%%%%%

We now look at the functions introduced in the previous section when the parameter $\zeta$ approaches the line $i\R$ in $\C$. 
For that purpose, one first easily observes from the properties of the gamma function that
the function $\zeta\mapsto W_{\mu,\nu}(\zeta)$ is analytic for $\Re(\zeta)>0$.
Similarly, for any fixed $x>0$ the functions $\zeta\mapsto L_{\mu,\nu}(x,\zeta)$,  
$\zeta \mapsto M_{\mu,\nu}(x,\zeta)$, and 
$\zeta \mapsto N_{\mu,\nu}(x,\zeta)$ are analytic in $\zeta$ with $\Re(\zeta)>0$.
In addition, the boundary values of these functions exist, namely by setting $\zeta =\sqrt{-(k^2\pm i \varepsilon)}$ with $k>0$ and $\varepsilon>0$ and by letting $\varepsilon \searrow 0$ the expressions
\begin{align*}
\W^\pm_{\mu,\nu}(k) & := \lim_{\varepsilon \searrow 0}W_{\mu,\nu}\big(\sqrt{-(k^2\pm i \varepsilon)}\big),\\
\L^\pm_{\mu,\nu}(x,k) & := \lim_{\varepsilon \searrow 0}L_{\mu,\nu}\big(x,\sqrt{-(k^2\pm i \varepsilon)}\big),\\
\M^\pm_{\mu,\nu}(x,k) & := \lim_{\varepsilon \searrow 0}M_{\mu,\nu}\big(x,\sqrt{-(k^2\pm i \varepsilon)}\big),\\
\N^\pm_{\mu,\nu}(x,k) & := \lim_{\varepsilon \searrow 0}N_{\mu,\nu}\big(x,\sqrt{-(k^2\pm i \varepsilon)}\big)
\end{align*}
are well defined for any $x>0$.
Note that we consider the principal branch of the square root, which means that $\Re(\sqrt{z})>0$ for any $z\in \C \setminus (-\infty,0]$ (and for $z\in (-\infty,0]$ we consider the limit from above). It turns out that the above expressions are given by
\begin{align}
\label{eq_Ww}\W^\pm_{\mu,\nu}(k) 
&=-\frac{2\Gamma(1+\mu)\Gamma(1\mp ik)}{\Gamma(\alpha\mp ik/2)\Gamma(\beta\mp ik/2)},\\
\nonumber \L^\pm_{\mu,\nu}(x,k)
\nonumber &=\tanh(x)^{\frac{1}{2}+\mu}\cosh(x)^{\pm ik}F\big(\alpha\mp ik/2,\beta\mp ik/2;1+\mu;\tanh(x)^2\big)\\
\nonumber &= \tanh(x)^{\frac{1}{2}+\mu}\cosh(x)^{\mp ik}F\big(\alpha\pm ik/2,\beta\pm ik/2;1+\mu;\tanh(x)^2\big),\\
\nonumber \M^\pm_{\mu,\nu}(x,k)
\nonumber &=\tanh(x)^{\frac{1}{2}+\mu}\cosh(x)^{\pm ik}F\big(\alpha\mp ik/2,\beta\mp ik/2;1\mp ik;\cosh(x)^{-2}\big) \\
\nonumber &=\tanh(x)^{\frac{1}{2}-\mu}\cosh(x)^{\pm ik}F\big(1-\alpha\mp ik/2,1-\beta\mp ik/2;1\mp ik;\cosh(x)^{-2}\big), \\
\nonumber \N^\pm_{\mu,\nu}(x,k)
\nonumber &=\tanh(x)^{\frac{1}{2}+\mu}\cosh(x)^{\mp ik}F\big(\alpha\pm ik/2,\beta\pm ik/2;1\pm ik;\cosh(x)^{-2}\big) \\
\nonumber &=\tanh(x)^{\frac{1}{2}-\mu}\cosh(x)^{\mp ik}F\big(1-\alpha\pm ik/2,1-\beta\pm ik/2;1\pm ik;\cosh(x)^{-2}\big).
\end{align} 

It is then easily observed that 
\begin{equation*}
\L^+_{\mu,\nu}(\cdot,k) = \L^-_{\mu,\nu}(\cdot,k) = \overline{\L^+_{\mu,\nu}(\cdot,k)}
\end{equation*}
which means that $\L^\pm_{\mu,\nu}(\cdot,k)$ correspond to the same real function
that we shall simply denote by $\L_{\mu,\nu}(\cdot,k)$.
One also easily infers that
\begin{equation}\label{eq_boundv}
\M^+_{\mu,\nu}(x,k)=\overline{\M^-_{\mu,\nu}(x,k)} 
\qquad \hbox{and} \qquad
\M^\pm_{\mu,\nu}(x,k) = \N^\mp_{\mu,\nu}(x,k). 
\end{equation}
For the Wronskian, the relation $\W^+_{\mu,\nu}(k)=\overline{\W^-_{\mu,\nu}(k)}$
holds, and one has  $\W^\pm_{\mu,\nu}(k)\neq 0$ for any $k>0$.

With the functions introduced above we can look at the boundary value of the resolvent 
of the operator $H_{\mu,\nu}$. 
For the next statement we introduce for $s\geq 0$ the weighted Hilbert space 
$L^2_s(\R_+):= \langle X\rangle^{-s} L^2(\R_+)$ with $\langle x\rangle:=\sqrt{1+x^2}$ and 
$\langle X\rangle$ the corresponding multiplication operator. We also set 
$L^2_{-s}(\R_+)$ for its dual space, and recall that this space can be identified with $\langle X\rangle^s L^2(\R_+)$.

\begin{proposition}[Limiting absorption principle]\label{prop_lap}
For $k>0$ the boundary values of the resolvent 
$$
R_{\mu,\nu}(k^2\pm i 0):=\lim_{\varepsilon \searrow 0} R_{\mu,\nu}(k^2\pm i\varepsilon)
$$
exist in the sense of operators from $L^2_s(\R_+)$ to $L^2_{-s}(\R_+)$ 
for any $s>\frac{1}{2}$,
uniformly in $k$ on each compact subset of $\R_+$. In addition, their
kernels are given by 
\begin{align}\label{eq_lim_r}
R_{\mu,\nu}(k^2\pm i0;x,y)=- \frac{1}{\W^\pm_{\mu,\nu}(k)}\begin{cases}
\L_{\mu,\nu}(x,k)\M^\pm_{\mu,\nu}(y,k) & \text{if}\ 0<x<y,\\
\L_{\mu,\nu}(y,k)\M^\pm_{\mu,\nu}(x,k) & \text{if}\ 0<y<x.
\end{cases}
\end{align}
\end{proposition}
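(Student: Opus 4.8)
The candidate for the boundary value is of course the operator whose kernel is displayed in \eqref{eq_lim_r}, assembled from the boundary values $\L_{\mu,\nu}(\cdot,k)$, $\M^\pm_{\mu,\nu}(\cdot,k)$ and $\W^\pm_{\mu,\nu}(k)$ whose existence has already been recorded above. The plan is therefore to show that these pointwise limits of kernels genuinely describe the limit of the resolvents in the topology of bounded operators from $L^2_s(\R_+)$ to $L^2_{-s}(\R_+)$. I would do this by exhibiting the weighted resolvents as Hilbert--Schmidt operators with a uniform Hilbert--Schmidt bound, and then passing to the limit by dominated convergence at the level of the integral kernels.

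First I would make the estimates of the Resolvent Lemma uniform up to the imaginary axis. Fix a compact set $K\subset\R_+$ and set $\zeta=\sqrt{-(k^2\pm i\varepsilon)}$ with $k\in K$ and $\varepsilon\geq 0$ small, so that $\zeta$ ranges over a compact neighbourhood of the segment $\{\mp ik\mid k\in K\}$ in $\{\Re(\zeta)\geq 0\}$. The constants $c_{\mu,\nu}(\zeta)$ and $c'_{\mu,\nu}(\zeta)$ appearing in \eqref{est_L} and \eqref{est_M} are suprema over $x>0$ of the moduli of the hypergeometric functions occurring in \eqref{eq_L_2} and \eqref{eq_M_2}; since $F(a,b;c;\cdot)$ depends continuously on its parameters and the variable $z=\tanh(x)^2$ ranges in $[0,1)$, these suprema remain bounded uniformly in $\zeta$ on the above neighbourhood (for $\mu=0$ the behaviour of $F$ as $z\to 1$ is the one recorded through \cite[15.3.10]{AS}, which produces the logarithmic factor in \eqref{estimate of resolvent case2}). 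Because $\W^\pm_{\mu,\nu}(k)\neq 0$ for every $k>0$, the factor $|W_{\mu,\nu}(\zeta)|^{-1}$ is likewise bounded on this neighbourhood. Hence the bounds \eqref{estimate of resolvent case1}--\eqref{estimate of resolvent case2} hold with a single constant independent of $\zeta$, $x$ and $y$; note that in the limit $\Re(\zeta)\searrow 0$ the exponential factor $\e^{-\Re(\zeta)|x-y|}$ degenerates to $1$ and only the prefactor $\tanh(x)^{\frac12}\tanh(y)^{\frac12}$ (together with the logarithmic factors when $\mu=0$) survives.

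This loss of exponential decay is exactly what forces the weighted spaces. For $s>\frac12$ one has $\int_0^\infty\langle x\rangle^{-2s}\,\d x<\infty$, so the uniform kernel bound gives
\[
\big\|\langle X\rangle^{-s}R_{\mu,\nu}(-\zeta^2)\langle X\rangle^{-s}\big\|_{\rm HS}^2
=\int_0^\infty\!\!\int_0^\infty \langle x\rangle^{-2s}\,\big|R_{\mu,\nu}(-\zeta^2;x,y)\big|^2\,\langle y\rangle^{-2s}\,\d x\,\d y\leq C,
\]
with $C$ independent of $\zeta$ in the chosen neighbourhood; the local factor $\tanh(x)^{\frac12}$ (and its logarithmic enhancement for $\mu=0$) is square integrable near $0$ and bounded near $\infty$, so the integral converges. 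Thus $\langle X\rangle^{-s}R_{\mu,\nu}(-\zeta^2)\langle X\rangle^{-s}$ is Hilbert--Schmidt with uniformly bounded norm. As $\varepsilon\searrow 0$ the constituent functions converge pointwise and $W_{\mu,\nu}(\zeta)\to\W^\pm_{\mu,\nu}(k)\neq 0$, so the weighted kernels converge pointwise to the weighted version of \eqref{eq_lim_r}; being dominated by the above integrable majorant, they converge in $L^2(\R_+\times\R_+)$ by dominated convergence. Convergence of the kernels in $L^2(\R_+\times\R_+)$ is convergence of the operators in Hilbert--Schmidt norm, hence in operator norm from $L^2_s(\R_+)$ to $L^2_{-s}(\R_+)$, which establishes both the existence of $R_{\mu,\nu}(k^2\pm i0)$ and the formula \eqref{eq_lim_r}. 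Since the domination and the pointwise convergence are uniform for $k\in K$, so is the resulting limit.

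The main obstacle is the uniformity claimed in the second paragraph: one must control the hypergeometric functions simultaneously as the variable $z=\tanh(x)^2$ approaches the singular point $1$ (that is, as $x\to\infty$) and as the parameter $\zeta$ approaches $\mp ik$. For $\mu>0$ the relevant functions extend continuously to $z=1$ and the control is routine, but for $\mu=0$ the logarithmic divergence governed by \cite[15.3.10]{AS} must be tracked carefully; this is precisely the source of the logarithmic factors in \eqref{estimate of resolvent case2} and of the threshold $s>\frac12$, and it is the only genuinely delicate point of the argument.
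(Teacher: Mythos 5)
Your proposal is correct and follows essentially the same route as the paper: uniform Hilbert--Schmidt bounds on the weighted resolvent kernels deduced from the estimates of the Resolvent Lemma, followed by pointwise convergence of the kernels and dominated convergence in $L^2(\R_+\times\R_+)$, which yields convergence in Hilbert--Schmidt norm and hence the claimed operator convergence, locally uniformly in $k$. The only place where the paper does more work than you is in justifying the uniformity of the constants $c_{\mu,\nu}(\zeta)$, $c'_{\mu,\nu}(\zeta)$ as $\Re(\zeta)\searrow 0$: rather than appealing to continuity of $F$ in its parameters (which on the non-compact range $z\in[0,1)$ does not by itself give a uniform supremum, because for $\Re(\zeta)=0$ the factor $(1-z)^{\zeta}$ merely oscillates near $z=1$), it applies the connection formula \eqref{eq_1536} (and \cite[15.3.10, 15.3.11]{AS} in the exceptional cases) to isolate this oscillatory factor explicitly --- precisely the delicate point you flag in your closing paragraph.
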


Before the proof, let us recall that a direct consequence
of a limiting absorption principle is the local absolute continuity  
of the spectrum of the underlying operator.
Thus, one infers from this statement that the operator $H_{\mu,\nu}$
has purely absolutely continuous spectrum on $(0,\infty)$. 

\begin{proof}
For $k>0$ and for $\varepsilon>0$ let us consider the operator 
$\langle X\rangle^{-s}R_{\mu,\nu}(k^2\pm i \varepsilon) \langle X\rangle^{-s}$ whose kernel is given by
\begin{equation}\label{eq_r1}
\langle x\rangle^{-s}R_{\mu,\nu}(k^2\pm i \varepsilon; x,y)
\langle y\rangle^{-s}.
\end{equation}
If $\mu>0$ one infers from \eqref{estimate of resolvent case1} that
\begin{align}
\nonumber &\big|\langle x\rangle^{-s}R_{\mu,\nu}(k^2\pm i \varepsilon; x,y)
\langle y\rangle^{-s} \big| \\
\label{eq_dom1} & \leq 
C_{\mu,\nu}\bigl(\sqrt{-k^2\mp i\varepsilon}\bigr)
\big|W_{\mu,\nu}\bigl(\sqrt{-k^2\mp i\varepsilon}\bigr)\big|^{-1}\langle x \rangle^{-s}\langle y \rangle^{-s} \tanh(x)^{\frac{1}{2}}\tanh(y)^{\frac{1}{2}}\e^{-\Re(\sqrt{-k^2\mp i\varepsilon})|x-y|},
\end{align}
and similarly if $\mu=0$ one infers from \eqref{estimate of resolvent case2} that
\begin{align*}
&\big|\langle x\rangle^{-s}R_{0,\nu}(k^2\pm i \varepsilon; x,y)
\langle y\rangle^{-s} \big| \\
& \leq 
C_{0,\nu}\bigl(\sqrt{-k^2\mp i\varepsilon}\bigr)
\big|W_{0,\nu}\bigl(\sqrt{-k^2\mp i\varepsilon}\bigr)\big|^{-1}\langle x \rangle^{-s}\langle y \rangle^{-s} \tanh(x)^{\frac{1}{2}}\tanh(y)^{\frac{1}{2}} \\
& \quad \times \big(1+\big|\ln\bigl(\tanh(x)\bigr)\big|\big)\big(1+\big|\ln\bigl(\tanh(y)\bigr)\big|\big) \e^{-\Re(\sqrt{-k^2\mp i\varepsilon})|x-y|}.
\end{align*}
Since $\big|\e^{-\Re(\sqrt{-k^2\mp i\varepsilon})|x-y|}\big|\leq 1$, 
one deduces that these two kernels belong to $L^2(\R_+\times \R_+)$, which means that the corresponding operators are Hilbert-Schmidt.

Let us now look at the limit $\varepsilon\searrow 0$.
One firstly observes that 
$\big|W_{\mu,\nu}\bigl(\sqrt{-k^2\mp i\varepsilon}\bigr)\big|^{-1}$
has a limit as $\varepsilon \searrow 0$, uniformly in $k$ inside any compact subset 
of $\R_+$, see the expressions \eqref{wronskian} and \eqref{eq_Ww}.
In order to eventually apply a Lebesgue Dominated Convergence Theorem, one also has to study the behavior of the constant $C_{\mu,\nu}\bigl(\sqrt{-k^2\mp i\varepsilon}\bigr)$ as $\varepsilon \searrow 0$.
This easily reduces to investigating the two factors
\begin{equation}\label{eq_2ex}
F\big(\alpha-\zeta/2,\beta-\zeta/2;1+\mu;\tanh(\cdot)^2\big)
\quad \hbox{and}\quad 
F\big(1-\alpha+\zeta/2,1-\beta+\zeta/2;1+\zeta;\cosh(\cdot)^{-2}\big).
\end{equation}
For the first one, we infer from \eqref{eq_1536} that
\begin{align*}
F\big(\alpha-\zeta/2,\beta-\zeta/2;1+\mu;\tanh(x)^2\big)
&=\frac{\Gamma(1+\mu)\Gamma(\zeta)}{\Gamma(\alpha+\zeta/2)\Gamma(\beta+\zeta/2)}\big(1+O(\e^{-2x})\big)\\
&\qquad+\cosh(x)^{-2\zeta}\frac{\Gamma(1+\mu)\Gamma(-\zeta)}{\Gamma(\alpha-\zeta/2)\Gamma(\beta-\zeta/2)}
\big(1+O(\e^{-2x})\big)
\end{align*}
as $x\to\infty$.
Note that for $\zeta=\sqrt{-k^2\mp i\varepsilon}$, 
the expression for $O(\e^{-2x})$ can be chosen locally uniformly in $k$ and independently of $\varepsilon$ for $\varepsilon$ small enough.
For the second factor in \eqref{eq_2ex}, one also infers from  \eqref{eq_1536} that
for $\mu \not \in \NN$ one has
\begin{align*}
F\bigl(1-\alpha+\zeta/2,1-\beta+\zeta/2;1+\zeta;\cosh(x)^{-2}\bigr) 
&=\frac{\Gamma(1+\zeta)\Gamma(\mu)}{\Gamma(\alpha+\zeta/2)\Gamma(\beta+\zeta/2)}
\big(1+O(x^2)\big)\\
&\qquad+\frac{\Gamma(1+\zeta)\Gamma(-\mu)}{\Gamma(1-\alpha+\zeta/2)\Gamma(1-\beta+\zeta/2)}\big(1+O(x^2)\big)
\end{align*}
as $x\searrow 0$.
Note that for $\zeta=\sqrt{-k^2\mp i\varepsilon}$, 
the expression for $O(x^2)$ can be chosen locally uniformly in $k$ and independently of $\varepsilon$ for $\varepsilon$ small enough.
As a consequence, one infers that for $\mu\not \in \NN$, the constants
$C_{\mu,\nu}(\sqrt{-k^2\mp i\varepsilon})$ have limits as
$\varepsilon \searrow 0$, uniformly in $k$ inside any compact subset 
of $\R_+$.
For $\mu\in \NN\setminus \{0\}$, the same approach holds, by using the asymptotic expansion provided in \cite[15.3.11]{AS} instead of formula \eqref{eq_1536}.

In the special case $\mu=0$, the expression
$$
\frac{1}{1+\big|\ln\big(\tanh(x)\big)\big|}F\big((1+\nu+\zeta)/2,(1-\nu+\zeta)/2;
1+\zeta;\cosh(x)^{-2}\big)
$$
should be considered instead of the second expression in \eqref{eq_2ex}. One then infers from \cite[15.3.10]{AS} that this expression in bounded in $x\in \R_+$, and that 
for $\zeta=\sqrt{-k^2\mp i\varepsilon}$ it has limits as
$\varepsilon \searrow 0$, uniformly in $k$  inside any compact subset 
of $\R_+$. 
As a consequence, one has
\begin{align}\label{eq_suppl}
\nonumber & \big|\langle x\rangle^{-s}R_{0,\nu}(k^2\pm i \varepsilon; x,y)
\langle y\rangle^{-s} \big|\\
&\leq C'_{0,\nu}(k)
\langle x \rangle^{-s}\langle y \rangle^{-s} \tanh(x)^{\frac{1}{2}}\tanh(y)^{\frac{1}{2}}
\big(1+\big|\ln\bigl(\tanh(x)\bigr)\big|\big)^2\big(1+\big|\ln\bigl(\tanh(y)\bigr)\big|\big)^2,
\end{align}
where the constant $C'_{0,\nu}(k)$ can be chosen locally uniformly in $k$ and independently of $\varepsilon$ for $\varepsilon$ small enough. Note that the r.h.s.~still belongs to $L^2(\R_+\times \R_+)$.

By summing up, the expression provided in \eqref{eq_r1}
belongs to $L^2(\R_+\times \R_+)$.
For $\mu>0$, it can be dominated by \eqref{eq_dom1}, where the constant $C_{\mu,\nu}(\sqrt{-k^2\mp i\varepsilon})$ can be chosen locally uniformly in $k$ and independently on $\varepsilon$ for $\varepsilon$ small enough. 
For $\mu=0$, it is dominated by \eqref{eq_suppl} where the constant $C'_{0,\nu}(k)$ can be chosen locally uniformly in $k$ and independently on $\varepsilon$ for $\varepsilon$ small enough. 
Since \eqref{eq_r1} converges pointwise to 
\begin{equation}\label{eq_lap2}
\langle x\rangle^{-s}R_{\mu,\nu}(k^2\pm i 0; x,y)
\langle y\rangle^{-s}
\end{equation}
with $R_{\mu,\nu}(k^2\pm i 0; x,y)$ provided by \eqref{eq_lim_r},
one concludes by the Lebesgue Dominated Convergence Theorem
that \eqref{eq_r1} converges in $L^2(\R_+\times\R_+)$
to \eqref{eq_lap2}. This is equivalent to the convergence of 
the operator $\langle X\rangle^{-s}R_{\mu,\nu}(k^2\pm i \varepsilon) \langle X\rangle^{-s}$ as $\varepsilon \searrow 0$ to the operator $\langle X\rangle^{-s}R_{\mu,\nu}(k^2\pm i 0) \langle X\rangle^{-s}$ in the Hilbert-Schmidt norm.
Clearly, this implies the stated convergence, with the local uniformity in $k$
already discussed.
\end{proof}

Let us now set
$$
\psi_{\mu,\nu}(x,k):=k\frac{1}{\W^+_{\mu,\nu}(k)}\L_{\mu,\nu}(x,k)
$$
and observe that for $0<x<y$ one has
\begin{align*}
& R_{\mu,\nu}(k^2+i0;x,y)-R_{\mu,\nu}(k^2-i0;x,y) \\
&= -\frac{1}{\W^+_{\mu,\nu}(k)} \L_{\mu,\nu}(x,k) \frac{1}{\W^-_{\mu,\nu}(k)}
\bigg(\W^-_{\mu,\nu}(k)\M^+_{\mu,\nu}(y,k) - \W^+_{\mu,\nu}(k)
\M^-_{\mu,\nu}(y,k)\bigg) \\
& =- \frac{2}{ik} \psi_{\mu,\nu}(x,k) \overline{\psi_{\mu,\nu}(y,k)},
\end{align*}
where we have used for the last equality that
$$
 \L_{\mu,\nu}(y,k) = \frac{i}{2k}\bigg(\W^-_{\mu,\nu}(k)\M^+_{\mu,\nu}(y,k) - \W^+_{\mu,\nu}(k) \M^-_{\mu,\nu}(y,k)\bigg) 
$$
as a limiting case of  \eqref{eq_lr} for $\zeta=\sqrt{-(k^2+i\varepsilon)}$ as $
\varepsilon \searrow 0$, together with the relations \eqref{eq_boundv}.
Note that for $0<y<x$ a similar computation leads to the expression
$-\frac{2}{ik} \psi_{\mu,\nu}(y,k) \overline{\psi_{\mu,\nu}(x,k)}$.

By putting these results together one deduces the following statement:

\begin{proposition}[Spectral density]\label{prop_density}
For $k>0$ the spectral density
\begin{equation*}
p_{\mu,\nu}(k^2):=\frac{1}{2\pi i}\big(R_{\mu,\nu}(k^2+i0)-R_{\mu,\nu}(k^2-i0)\big)
\end{equation*}
exists as a bounded operator from $L^2_s(\R_+)$ to $L^2_{-s}(\R_+)$ for any $s>1/2$, and has kernel
\begin{equation*}
p_{\mu,\nu}(k^2;x,y)=\frac{1}{\pi k}\psi_{\mu,\nu}(x,k)\overline{\psi_{\mu,\nu}(y,k)}
= \frac{k}{\pi}\frac{1}{|\W_{\mu,\nu}^+(k)|^2}\L_{\mu,\nu}(x,k)\L_{\mu,\nu}(y,k).
\end{equation*}
\end{proposition}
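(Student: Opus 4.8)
The plan is to read off both claims directly from the material assembled just before the statement, so that little genuinely new work remains. First I would settle the boundedness assertion. By Proposition~\ref{prop_lap}, each of the two boundary values $R_{\mu,\nu}(k^2\pm i0)$ exists as a bounded operator from $L^2_s(\R_+)$ to $L^2_{-s}(\R_+)$ for every $s>1/2$, uniformly on compact subsets of $\R_+$. Since $p_{\mu,\nu}(k^2)$ is by definition the linear combination $\frac{1}{2\pi i}\big(R_{\mu,\nu}(k^2+i0)-R_{\mu,\nu}(k^2-i0)\big)$, it is automatically bounded from $L^2_s(\R_+)$ to $L^2_{-s}(\R_+)$, and its integral kernel is the corresponding combination of the two resolvent kernels recorded in \eqref{eq_lim_r}.

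Next I would identify that kernel explicitly. The computation performed immediately before the statement shows that, for $0<x<y$,
\[
R_{\mu,\nu}(k^2+i0;x,y)-R_{\mu,\nu}(k^2-i0;x,y)=-\frac{2}{ik}\,\psi_{\mu,\nu}(x,k)\,\overline{\psi_{\mu,\nu}(y,k)},
\]
while for $0<y<x$ the symmetric expression $-\frac{2}{ik}\,\psi_{\mu,\nu}(y,k)\,\overline{\psi_{\mu,\nu}(x,k)}$ is obtained. The crucial observation is that these two expressions coincide: because $\L_{\mu,\nu}(\cdot,k)$ is real-valued and $\W^-_{\mu,\nu}(k)=\overline{\W^+_{\mu,\nu}(k)}$, one has
\[
\psi_{\mu,\nu}(x,k)\,\overline{\psi_{\mu,\nu}(y,k)}=\frac{k^2}{|\W^+_{\mu,\nu}(k)|^2}\,\L_{\mu,\nu}(x,k)\,\L_{\mu,\nu}(y,k),
\]
which is manifestly symmetric in $x$ and $y$. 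Hence the jump of the resolvent across the spectrum is given by a single expression valid for all $x\neq y$, and the case distinction disappears.

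Finally I would close by a short arithmetic simplification. Multiplying the common expression by $\frac{1}{2\pi i}$ and using $\frac{1}{2\pi i}\big(-\frac{2}{ik}\big)=\frac{1}{\pi k}$ produces $p_{\mu,\nu}(k^2;x,y)=\frac{1}{\pi k}\,\psi_{\mu,\nu}(x,k)\,\overline{\psi_{\mu,\nu}(y,k)}$, the first claimed form; substituting the symmetric product above then yields the second form $\frac{k}{\pi}\,|\W^+_{\mu,\nu}(k)|^{-2}\,\L_{\mu,\nu}(x,k)\,\L_{\mu,\nu}(y,k)$. The only point demanding real care---and hence the main, though mild, obstacle---is verifying that the two branches $0<x<y$ and $0<y<x$ genuinely merge; this rests entirely on the reality of $\L_{\mu,\nu}(\cdot,k)$ together with the conjugation relations for $\M^\pm_{\mu,\nu}$ and $\W^\pm_{\mu,\nu}$ recorded in \eqref{eq_boundv} and the surrounding discussion. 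Everything else follows immediately from the limiting absorption principle and from the pre-computed jump.
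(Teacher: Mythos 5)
Your proposal is correct and follows essentially the same route as the paper: existence and boundedness from Proposition~\ref{prop_lap}, the kernel from the jump computation carried out just before the statement, and the merging of the two branches $0<x<y$ and $0<y<x$ via the symmetry $\psi_{\mu,\nu}(x,k)\overline{\psi_{\mu,\nu}(y,k)}=\psi_{\mu,\nu}(y,k)\overline{\psi_{\mu,\nu}(x,k)}$. Your explicit justification of that symmetry through the reality of $\L_{\mu,\nu}(\cdot,k)$ and the relation $\W^-_{\mu,\nu}(k)=\overline{\W^+_{\mu,\nu}(k)}$ simply spells out what the paper leaves as "easily checked."
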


\begin{proof}
The existence of the limit directly follows from Proposition \ref{prop_lap}.
The expression for the kernel of this operator is a consequence of the previous
computations together with the equality
$$
\psi_{\mu,\nu}(x,k)\overline{\psi_{\mu,\nu}(y,k)} = \psi_{\mu,\nu}(y,k)\overline{\psi_{\mu,\nu}(x,k)}
$$
which can be easily checked.
\end{proof}

%%%%%%%%%%%%%%%%%%%%%%%%%%%%%%%%%%%%
\section{Generalized Fourier kernels}\label{sec_Fourier}
%%%%%%%%%%%%%%%%%%%%%%%%%%%%%%%%%%%%

We now consider a slightly different factorization of the spectral density, 
and recall several of its asymptotic behaviors. Consequences on scattering theory
will be provided in the following section.
For any $x,k\in \R_+$ we define the generalized Fourier kernels by 
\begin{align}\label{eq_original_F^-}
\nonumber & \F_{\mu,\nu}^\pm(x,k) \\
\nonumber & := -2^{\pm ik}\sqrt{\frac{2}{\pi}}\frac{k}{\W_{\mu,\nu}^\mp(k)}\L_{\mu,\nu}(x,k) \\
& = \frac{2^{\pm ik}}{\sqrt{2\pi}}k
\frac{\Gamma(\alpha\pm ik/2)\Gamma(\beta\pm ik/2)}{\Gamma(1+\mu)\Gamma(1\pm ik)} 
\tanh(x)^{\frac{1}{2}+\mu} 
\cosh(x)^{ik} F\big(\alpha- ik/2,\beta- ik/2;1+\mu;\tanh(x)^2\big)
\end{align}
and observe that the following relations hold:
$$
2k\;\! p_{\mu,\nu}(k^2;x,y) = \F^-_{\mu,\nu}(x,k)\F^+_{\mu,\nu}(y,k)
= \F^+_{\mu,\nu}(x,k)\F^-_{\mu,\nu}(y,k)
$$
and
\begin{equation*}
\F^+_{\mu,\nu}(x,k) = \overline{\F^-_{\mu,\nu}(x,k)}.
\end{equation*}
As a consequence of the second relation, we shall mainly 
concentrate on the expression for $\F^-_{\mu,\nu}$.

In order to understand the behavior of $\F^-_{\mu,\nu}(x,k)$ as $x\to \infty$, we 
consider again the linear transformation formula \eqref{eq_1536}, and infer that 
\begin{align*}
& \cosh(x)^{ ik}F\big(\alpha- ik/2,\beta- ik/2;1+\mu;\tanh(x)^2\big) \\
& = \frac{\Gamma(1+\mu)\Gamma( ik)}{\Gamma(\alpha+ ik/2)\Gamma(\beta+ ik/2)} \cosh(x)^{ ik}  F\big(\alpha- ik/2,\beta- ik/2;1- ik;\cosh(x)^{-2}\big) \\
& \quad +   \frac{\Gamma(1+\mu)\Gamma(- ik)}{\Gamma(\alpha- ik/2)\Gamma(\beta- ik/2)}  \cosh(x)^{- ik}
F\big(\alpha + i k /2,\beta + ik /2;1+  ik, \cosh(x)^{-2}\big).
\end{align*}
Motivated by this expression we also define
\begin{align}\label{eq_Smunu}
\nonumber \sigma_{\mu,\nu}(k):= & -4^{-ik}\frac{\Gamma(\alpha- ik/2)\Gamma(\beta- ik/2)\Gamma(ik)}{\Gamma(\alpha+ ik/2)\Gamma(\beta+ ik/2)\Gamma(-ik)} \\
= & \frac{\Gamma(\alpha- ik/2)\Gamma(\beta- ik/2)\Gamma(1+ik/2)\Gamma(1/2+ik/2)}{\Gamma(\alpha+ ik/2)\Gamma(\beta+ ik/2)\Gamma(1-ik/2)\Gamma(1/2-ik/2)},
\end{align}
where the duplication formula \cite[6.1.18]{AS} has been used for $\Gamma(\pm i k)$.
Thus, by inserting these expressions in the definition of $\F^-_{\mu,\nu}$ one gets
\begin{align*}
\F^-_{\mu,\nu}(x,k) =  & \frac{-i}{\sqrt{2\pi}}  \tanh(x)^{\frac{1}{2}+\mu}
2^{-ik}\bigg\{
4^{ik}\cosh(x)^{ik}  F\big(\alpha- ik/2,\beta- ik/2;1- ik;\cosh(x)^{-2}\big)  \sigma_{\mu,\nu}(k)\\
& \qquad\qquad\qquad\qquad \qquad -  \cosh(x)^{- ik}
F\big(\alpha + i k /2,\beta + ik /2;1+  ik, \cosh(x)^{-2}\big)\bigg\}.
\end{align*}
By setting
$$
\F_{\mu,\nu}(x,k):=\frac{1}{\sqrt{2\pi}}  \tanh(x)^{\frac{1}{2}+\mu}
(\e^x+\e^{-x})^{ik} F\big(\alpha- ik/2,\beta- ik/2;1- ik;\cosh(x)^{-2}\big), 
$$
then the above expression simply reads
\begin{equation}\label{eq_F^-}
\F^-_{\mu,\nu}(x,k)= -i\big\{\F_{\mu,\nu}(x,k) \sigma_{\mu,\nu}(k)
-\overline{\F_{\mu,\nu}(x,k)}\big\}.
\end{equation}

From these various expressions it is now easy to deduce the asymptotic behaviors in 
$x$. The proof of the following statement is an easy application of the power series of hyperbolic functions, starting from \eqref{eq_F^-} for the limit at $\infty$, and starting from
\eqref{eq_original_F^-} for the limit at $0$.

\begin{lemma}\label{lem_x}
For any fixed $k>0$ one has  as  $x\to \infty$
\begin{equation}\label{eq_lim_x0}
 \F^-_{\mu,\nu}(x,k)  =\frac{-i}{\sqrt{2\pi}}\Big( \e^{ikx}\sigma_{\mu,\nu}(k)
-\e^{-ixk}\Big) +O(\e^{-2x}\big)
\end{equation}
and as $x\searrow 0$
\begin{equation*}
\F_{\mu,\nu}^-(x,k)
=  2^{- ik}k\sqrt{\frac{1}{2\pi}}
\frac{\Gamma(\alpha- ik/2)\Gamma(\beta- ik/2)}{\Gamma(1+\mu)\Gamma(1- ik)} 
x^{\frac{1}{2}+\mu} \big(1+O(x^2)\big),
\end{equation*}
where both remainder terms are locally uniformly $k$-dependent.
\end{lemma}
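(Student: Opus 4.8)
The plan is to treat the two regimes separately, as suggested, relying on the two representations of $\F^-_{\mu,\nu}$ already at hand: the factorized form \eqref{eq_F^-} for the behavior at $\infty$, and the explicit form \eqref{eq_original_F^-} for the behavior at $0$. In both cases the strategy is identical: isolate the three elementary factors (a power of $\tanh(x)$, a power of a hyperbolic function, and a Gauss hypergeometric function whose argument tends to $0$), expand each of them, and collect the lowest-order contributions.

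For the limit at $\infty$ I would start from \eqref{eq_F^-} and analyze $\F_{\mu,\nu}(x,k)$ alone. As $x\to\infty$ one has $\tanh(x)^{\frac{1}{2}+\mu}=1+O(\e^{-2x})$, and writing $\e^x+\e^{-x}=\e^x(1+\e^{-2x})$ gives $(\e^x+\e^{-x})^{ik}=\e^{ikx}(1+O(\e^{-2x}))$. Since $\cosh(x)^{-2}=O(\e^{-2x})\to 0$ and $F(a,b;c;0)=1$, the power series of the hypergeometric function yields $F(\alpha-ik/2,\beta-ik/2;1-ik;\cosh(x)^{-2})=1+O(\e^{-2x})$. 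Multiplying these three expansions gives $\F_{\mu,\nu}(x,k)=\frac{1}{\sqrt{2\pi}}\e^{ikx}+O(\e^{-2x})$, hence $\overline{\F_{\mu,\nu}(x,k)}=\frac{1}{\sqrt{2\pi}}\e^{-ikx}+O(\e^{-2x})$, and substitution into \eqref{eq_F^-} produces \eqref{eq_lim_x0} after absorbing the bounded factor $\sigma_{\mu,\nu}(k)$ into the remainder.

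For the limit at $0$ I would instead start from the explicit form \eqref{eq_original_F^-}, whose prefactor is already the constant appearing in the claimed asymptotics. Here $\tanh(x)=x(1+O(x^2))$ gives $\tanh(x)^{\frac{1}{2}+\mu}=x^{\frac{1}{2}+\mu}(1+O(x^2))$, while $\cosh(x)=1+O(x^2)$ yields $\cosh(x)^{ik}=1+O(x^2)$, and again $F(\alpha-ik/2,\beta-ik/2;1+\mu;\tanh(x)^2)=1+O(x^2)$ by the power series, since $\tanh(x)^2=O(x^2)$. Collecting these three factors gives exactly the stated expansion at $0$.

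The only point requiring a little care — rather than a genuine obstacle — is the claim that both remainder terms are locally uniform in $k$. This I would obtain by noting that all quantities entering the estimates ($\sigma_{\mu,\nu}(k)$, the ratio of gamma functions, and the implicit constants in the two hypergeometric expansions) depend continuously on $k$, and that for $k$ in a compact subset of $\R_+$ the parameters $\alpha\mp ik/2$, $\beta\mp ik/2$, and $1\mp ik$ remain in a fixed compact set, so the tail of the convergent hypergeometric series is bounded uniformly. Since the thresholds $\e^{-2x}$ and $x^2$ carry no $k$-dependence, the claimed local uniformity follows at once.
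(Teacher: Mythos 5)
Your proposal is correct and follows exactly the route the paper indicates: expanding the elementary hyperbolic factors and the hypergeometric series (whose argument tends to $0$) starting from \eqref{eq_F^-} for $x\to\infty$ and from \eqref{eq_original_F^-} for $x\searrow 0$. The added remark on local uniformity in $k$ via compactness of the parameter set is a sensible way to justify the paper's final claim.
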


Let us now mention another asymptotic of $\F^-_{\mu,\nu}$
which is closer to the expansion of the hypergeometric function ${}_2F_1$ in terms of Bessel functions, see \cite[Eq.~5.7.(1)]{Wa} for the original result and also \cite{Na, TC} for further develoments.
 In the framework of the analysis of noncompact semisimple Lie groups, 
a similar result is also recalled in \cite[Eq.~(2.34)]{K}.
Namely, one infers from the latter reference that for any fixed $x,k\in \R_+$ one has
\begin{equation}\label{eq_asym1}
\lim_{\epsilon\searrow 0}\F_{\mu,\nu}^-(\epsilon x,k/\epsilon) 
=  \e^{-i \frac{\pi}{2}(\mu-\frac{1}{2})} 
\sqrt{\frac{2}{\pi}} \J_\mu(xk),
\end{equation} 
where $\J_\mu$ denotes the Bessel function for dimension 1, as introduced in \cite[App.~A.4]{DR}
and defined by 
$$
\J_\mu(x):=\sqrt{\frac{\pi x}{2}}J_\mu(x)
$$ 
with $J_\mu$ the usual Bessel function. 
However, note that this expansion is only part of a more general result developed 
by several authors over the last couple of decades, see for example \cite{J, KO}
and references therein. In the next paragraph, we show that the leading term provided in \cite[Thm.~3.1]{KO} leads also to \eqref{eq_asym1}.

Let us firstly introduce another representation of $\F^-_{\mu,\nu}$\;\!: 
By taking the equality \cite[15.3.4]{AS} into account, one infers that 
\begin{align*}
\F_{\mu,\nu}^-(x,k)
= & 2^{- ik}k\sqrt{\frac{1}{2\pi}}
\frac{\Gamma(\alpha- ik/2)\Gamma(\beta- ik/2)}{\Gamma(1+\mu)\Gamma(1- ik)} 
 \\
&  \quad  \times \tanh(x)^{\frac{1}{2}+\mu} \cosh(x)^{2\alpha} F\big(\alpha- ik/2,\alpha+ ik/2;1+\mu;-\sinh(x)^2\big).
\end{align*}
Then, it follows from \cite[Thm.~3.1]{KO} that
\begin{align*}
&F\big(\alpha- ik/2,\alpha+ ik/2;1+\mu;-\sinh(x)^2\big) \\
& =  \frac{\Gamma(1+\mu)\Gamma(\alpha-\mu-ik/2)}{\Gamma(\alpha-ik/2)}
\bigg(x^{\frac{1}{2}}\tanh(x)^{-\frac{1}{2}-\mu}\cosh(x)^{-2\alpha} I_\mu (-ikx) + 
\O\big(\Phi_1(-ik/2,2x)\big)\bigg)
\end{align*}
where the identifications $\lambda:=-ik/2$, $a:=\alpha$, $c:=1+\mu$, 
$\frac{1-z}{2}:=-\sinh(x)^2$, and $\zeta:=\ln\big(z+\sqrt{z^2-1}\big) = 2x$
have been taken into account. 
In this expression, $I_\mu$ denotes the modified Bessel function, and $\Phi_1(-ik/2,2x)$ 
represents a remainder term with a precise decay property, see \cite[Eq.~(3.7)]{KO}. As a consequence, the main term in the expansion of 
$\F_{\mu,\nu}^-(x,k)$ reads
\begin{align*}
& 2^{-ik}k\sqrt{\frac{1}{2\pi}}
\frac{\Gamma(\beta- ik/2)\Gamma(\alpha-\mu-ik/2)}{\Gamma(1- ik)} x^{\frac{1}{2}} I_\mu (-ikx) \\
& = \sqrt{\frac{2}{\pi}} 2^{-ik}\sqrt{k}\frac{B\big(\beta- ik/2, \alpha-\mu-ik/2\big)}{\sqrt{2\pi}}
\sqrt{\frac{\pi k x}{2}} I_\mu (-ikx) \\
&= \e^{-i\frac{\pi}{2}(\mu-\frac{1}{2})}\sqrt{\frac{2}{\pi}} \bigg[\sqrt{k}\frac{B\big(\beta- ik/2, \alpha-\mu-ik/2\big)}{\e^{i\frac{\pi}{4}}\;\!2^{ik}\;\!\sqrt{2\pi}}\bigg]
\J_\mu (kx),
\end{align*}
where $B(\cdot,\cdot)$ denotes the Beta function, see \cite[6.2.2]{AS}. 
If one sets
\begin{equation*}
b(k):= \sqrt{k}\frac{B\big(\beta- ik/2, \alpha-\mu-ik/2\big)}{\e^{i\frac{\pi}{4}}\;\!2^{ik}\;\!\sqrt{2\pi}}
\end{equation*}
and takes the asymptotic expansion of the Beta function into account, one easily gets that 
$\lim_{k\to \infty}b(k)=1$.

If we summarize this finding and add another similar result one gets:

\begin{lemma}\label{lem_dilation}
For any fixed $x,k\in \R_+$ one has
\begin{equation*}
\lim_{\epsilon\searrow 0}\F_{\mu,\nu}^-(\epsilon x,k/\epsilon) 
=  \e^{-i \frac{\pi}{2}(\mu-\frac{1}{2})} 
\sqrt{\frac{2}{\pi}} \J_\mu(xk),
\end{equation*}
and 
\begin{equation*}
\lim_{\epsilon\to \infty}\F_{\mu,\nu}^-(\epsilon x,k/\epsilon)  
=  \frac{-i}{\sqrt{2\pi}}\Big( \e^{ikx}\sigma_{\mu,\nu}(0)
-\e^{-ixk}\Big)
\end{equation*}
with 
\begin{equation}\label{eq_value_0}
\sigma_{\mu,\nu}(0)=\begin{cases} -1 & \hbox{ if } \beta \in -\NN, \\
1 & \hbox{ if } \beta \not \in -\NN.\end{cases} 
\end{equation}
\end{lemma}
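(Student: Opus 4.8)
The plan is to read off both limits from the $x$-asymptotics of $\F^-_{\mu,\nu}$ that are already available, exploiting the fact that the leading terms are invariant under the rescaling $(x,k)\mapsto(\epsilon x,k/\epsilon)$, so that the limits reduce to the behavior of the scalar prefactors as their argument is sent to $\infty$ (first limit) or to $0$ (second limit).

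For the first limit, observe that it is precisely the content of \eqref{eq_asym1}, for which two derivations have already been provided. Starting from the representation established just above via \cite[Thm.~3.1]{KO}, one has $\F^-_{\mu,\nu}(x,k)=\e^{-i\frac{\pi}{2}(\mu-\frac{1}{2})}\sqrt{2/\pi}\,b(k)\J_\mu(kx)+\mathcal{O}\big(\Phi_1(-ik/2,2x)\big)$, with $b(k)\to 1$ as $k\to\infty$. Substituting $x\mapsto\epsilon x$ and $k\mapsto k/\epsilon$ leaves the Bessel argument invariant, since $\J_\mu\big((k/\epsilon)(\epsilon x)\big)=\J_\mu(kx)$, while $b(k/\epsilon)\to 1$ as $\epsilon\searrow 0$ because $k/\epsilon\to\infty$. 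Once the remainder is seen to vanish in this scaling regime, one recovers \eqref{eq_asym1}.

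For the second limit, I would start from \eqref{eq_lim_x0} in Lemma \ref{lem_x}. Performing the substitution $x\mapsto\epsilon x$, $k\mapsto k/\epsilon$, the two exponentials become scale-invariant, $\e^{\pm i(k/\epsilon)(\epsilon x)}=\e^{\pm ikx}$, and the error $\mathcal{O}(\e^{-2\epsilon x})$ tends to $0$ as $\epsilon\to\infty$. Hence the whole limit reduces to the computation of $\lim_{\epsilon\to\infty}\sigma_{\mu,\nu}(k/\epsilon)=\sigma_{\mu,\nu}(0)$, that is, to the boundary value of $\sigma_{\mu,\nu}$ at $k=0$. To determine $\sigma_{\mu,\nu}(0)$ I would inspect the product \eqref{eq_Smunu} factor by factor. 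The ratios built from $\Gamma(1\pm ik/2)$ and $\Gamma(1/2\pm ik/2)$ are regular and converge to $1$, and the same holds for the ratio of $\Gamma(\alpha\mp ik/2)$ since $\alpha=\frac{1+\mu+\nu}{2}>0$. The only delicate factor is $\Gamma(\beta-ik/2)/\Gamma(\beta+ik/2)$: if $\beta\notin-\NN$ this ratio tends to $1$, giving $\sigma_{\mu,\nu}(0)=1$, whereas if $\beta=-n\in-\NN$ both gamma functions have a simple pole at $k=0$, and using $\Gamma(-n+w)\sim\frac{(-1)^n}{n!}\,w^{-1}$ as $w\to 0$ with $w=\mp ik/2$ one finds that the singular parts cancel and the ratio tends to $\tfrac{ik/2}{-ik/2}=-1$. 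This yields \eqref{eq_value_0}.

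The main obstacle is exactly this treatment of the case $\beta\in-\NN$: individual gamma factors diverge at $k=0$, so one must extract their leading Laurent terms to exhibit the finite limit $-1$ of the ratio. A secondary, purely technical point is to ensure that the error terms in \eqref{eq_lim_x0} (respectively in the expansion coming from \cite[Thm.~3.1]{KO}) are controlled uniformly as $k/\epsilon$ approaches $0$ (respectively tends to $\infty$), so that the pointwise limits of the leading terms genuinely compute the limits of $\F^-_{\mu,\nu}(\epsilon x,k/\epsilon)$.
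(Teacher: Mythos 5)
Your proposal is correct and follows essentially the same route as the paper: the first limit is exactly \eqref{eq_asym1}, obtained from the expansion via \cite[Thm.~3.1]{KO} with $b(k/\epsilon)\to 1$, and the second limit is read off from \eqref{eq_lim_x0} combined with the evaluation of $\sigma_{\mu,\nu}(0)$ from \eqref{eq_Smunu}, where the only delicate point is the cancellation of the simple poles of $\Gamma(\beta\mp ik/2)$ when $\beta\in-\NN$, yielding the ratio $-1$. Your Laurent-expansion computation of that ratio is precisely the "careful analysis" the paper alludes to.
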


The second statement is a direct consequence of \eqref{eq_lim_x0} together with 
a careful analysis of the expression for $\sigma_{\mu,\nu}$ provided in \eqref{eq_Smunu}.

%%%%%%%%%%%%%%%%%%%%%%%%%%%%%%%%%%%%
\section{Scattering theory}\label{sec_Scat}
%%%%%%%%%%%%%%%%%%%%%%%%%%%%%%%%%%%%

Motivated by the previous computations, 
let us introduce the integral operators $\F_{\mu,\nu}^\pm$ defined for any compactly supported $f\in L^2(\R_+)$ and $k>0$ by
$$
[\F_{\mu,\nu}^{\pm}f](k):=\int_0^{\infty}\overline{\F_{\mu,\nu}^{\pm}(x,k)}f(x)\;\!\d x
= \int_0^{\infty}\F_{\mu,\nu}^{\mp}(x,k)f(x)\;\!\d x,
$$ 
with $\F_{\mu,\nu}^\pm(x,k)$ provided in \eqref{eq_original_F^-}.
These transforms are often referred to as the \emph{generalized Fourier transforms}.

Recall also that the spectral density has been introduced in Proposition \ref{prop_density}.
It then follows from Stone's formula that for $0<a<b$ the expression
$$
\I_{[a,b]}(H_{\mu,\nu}) :=\int_{\sqrt{a}}^{\sqrt{b}}p_{\mu,\nu}(k^2)\;\! \d(k^2)
= 2\int_{\sqrt{a}}^{\sqrt{b}}p_{\mu,\nu}(k^2) \;\!k\;\!\d k
$$
exists and defines the spectral projection of the operator $H_{\mu,\nu}$ on the interval $[a,b]$. The kernel of this operator is given for $x,y\in \R_+$ by 
$$
\I_{[a,b]}(H_{\mu,\nu})(x,y)= \int_{\sqrt{a}}^{\sqrt{b}}\F^\pm_{\mu,\nu}(x,k)\F^\mp_{\mu,\nu}(y,k) \;\!\d k = \int_\R \F^\pm_{\mu,\nu}(x,k) \I_{[a,b]}(k^2)\F^\mp_{\mu,\nu}(y,k)\;\!\d k,
$$
where $\I_{[a,b]}$ denotes the characteristic function on $[a,b]$ in the last expression.
Then, since the kernel of $\big(\F_{\mu,\nu}^\pm\big)^*(x,k)$ for $x,k>0$ is given by
$\F_{\mu,\nu}^{\pm}(x,k)$,
one deduces that 
\begin{equation}\label{eq_inv1}
\I_{[a,b]}(H_{\mu,\nu}) = \big(\F_{\mu,\nu}^{\pm}\big)^* \I_{[a,b]}(X^2)\F_{\mu,\nu}^\pm.
\end{equation}
We also infer from these relations that the equalities 
\begin{equation}\label{eq_inv2}
\F_{\mu,\nu}^\pm \big(\F_{\mu,\nu}^\pm\big)^* =\I
\qquad \hbox{and}\qquad
\big(\F_{\mu,\nu}^\pm\big)^*\F_{\mu,\nu}^\pm
= \I_{[0,\infty]}(H_{\mu,\nu})
= \I-\I_p(H_{\mu,\nu})
\end{equation}
hold, with $\I_p(H_{\mu,\nu})$ the projection on the space spanned by the eigenfunctions of $H_{\mu,\nu}$.
Note that these relations were already mentioned (in a slightly different language) in \cite[Thm.~2.3 \& Thm.~2.4]{K}.

We now recall the definition of the cosine and sine transforms
on $L^2(\R_+)$, namely
\begin{align*}
[\F_{\rm N} f](k)&:=\sqrt{\frac2\pi}\int_0^\infty \cos(kx) f(x)\;\!\d x,\\
[\F_{\rm D} f](k)&:=\sqrt{\frac2\pi}\int_0^\infty \sin(kx) f(x)\;\!\d x.
\end{align*}
Similarly, the Hankel transform is given by 
\begin{equation*}
[\F_\mu f](k):=\int_0^\infty  \sqrt{\frac2\pi}\J_\mu(kx) f(x)\;\!\d x.
\end{equation*}
These maps maps are firstly defined on $f \in C_{\rm c}(\R_+)$, but are known to extend continuously to unitary maps in $L^2(\R_+)$.

Based on the definitions introduced so far, we can now define the M\o ller wave operators for the pair of operators $(H_{\mu,\nu},H_{\rm D})$, namely
$$
W_\pm(H_{\mu,\nu},H_{\rm D}) := \big(\F_{\mu, \nu}^{\pm}\big)^*\;\! \F_{\rm D}.
$$
The first task is to show that this operator corresponds to the usual wave operators defined with the time dependent scattering theory, namely:

\begin{proposition}
The following equalities hold:
$$
W_\pm(H_{\mu,\nu},H_{\rm D}) = \strong_{t\to \pm \infty}\e^{itH_{\mu,\nu}}\e^{-itH_{\rm D}},
$$
where $\strong$ means the limit in the strong topology.
\end{proposition}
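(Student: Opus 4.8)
The plan is to prove the identification of the stationary wave operators $W_\pm(H_{\mu,\nu},H_{\rm D})=\big(\F_{\mu,\nu}^\pm\big)^*\F_{\rm D}$ with the time-dependent M\o ller operators by exhibiting both as intertwining operators and matching them on a suitable dense set. First I would record the spectral representations that the previous sections furnish: the generalized Fourier transform $\F_{\mu,\nu}^\pm$ diagonalizes $H_{\mu,\nu}$ on its absolutely continuous part, in the sense that $\F_{\mu,\nu}^\pm H_{\mu,\nu}\big(\F_{\mu,\nu}^\pm\big)^*$ acts as multiplication by $k^2$ on $L^2(\R_+)$, which follows from \eqref{eq_inv1} together with the relations \eqref{eq_inv2}; analogously $\F_{\rm D}$ diagonalizes $H_{\rm D}$ as multiplication by $k^2$, since $H_{\rm D}=H_{1/2}$ is the Dirichlet Laplacian and the sine transform is its spectral transform. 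Consequently both $\e^{itH_{\mu,\nu}}\big(\F_{\mu,\nu}^\pm\big)^*$ and $\big(\F_{\mu,\nu}^\pm\big)^*\e^{itk^2}$ agree, and the time-dependent product becomes, after inserting $\I=\F_{\rm D}^*\F_{\rm D}$ and $\I=\F_{\mu,\nu}^\pm\big(\F_{\mu,\nu}^\pm\big)^*$ from \eqref{eq_inv2},
$$
\e^{itH_{\mu,\nu}}\e^{-itH_{\rm D}}=\big(\F_{\mu,\nu}^\pm\big)^*\e^{itk^2}\,\F_{\mu,\nu}^\pm\F_{\rm D}^*\,\e^{-itk^2}\F_{\rm D},
$$
so everything is reduced to analyzing, in the spectral variable $k$, the conjugation of the fixed unitary $\F_{\mu,\nu}^\pm\F_{\rm D}^*$ by the free evolution $\e^{\pm itk^2}$.

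The core of the argument is then a stationary phase / oscillatory integral analysis. For $g$ in a dense subset of $L^2(\R_+)$ — say $g$ such that its sine-transform preimage is compactly supported and smooth — I would study
$$
\Big[\e^{itk^2}\big(\F_{\mu,\nu}^\pm\F_{\rm D}^*\big)\e^{-itk^2}g\Big](k)
=\int_0^\infty \e^{it(k^2-\ell^2)} K^\pm(k,\ell)\,g(\ell)\;\!\d\ell,
$$
where $K^\pm$ is the kernel of $\F_{\mu,\nu}^\pm\F_{\rm D}^*$. The claim to establish is that as $t\to\pm\infty$ this converges strongly to the multiplication operator given by the on-shell limit $k=\ell$ of the kernel against a delta; concretely, the rapidly oscillating factor $\e^{it(k^2-\ell^2)}$ localizes the integral to the diagonal $\ell=k$, and the limiting operator is multiplication by the boundary value of the kernel there. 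The essential input is the large-$x$ asymptotics of $\F_{\mu,\nu}^-$ recorded in Lemma \ref{lem_x}, namely \eqref{eq_lim_x0}, which shows that $\F_{\mu,\nu}^\pm(x,k)$ behaves like a purely outgoing/incoming plane wave $\e^{\pm ikx}$ modulated by $\sigma_{\mu,\nu}(k)$ up to an exponentially small $O(\e^{-2x})$ correction; this is precisely what guarantees that the difference between $\F_{\mu,\nu}^\pm$ and the sine transform is, after composition, an operator whose off-diagonal oscillatory contribution vanishes in the limit.

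In carrying this out I would follow the standard Kato–Kuroda stationary scheme: fix the sign, write the time-dependent operator minus its conjectured limit, integrate by parts in $\ell$ using the non-stationarity of the phase $k^2-\ell^2$ away from $\ell=k$, and control the boundary and remainder terms using the decay \eqref{eq_lim_x0} of $\F_{\mu,\nu}^-(\cdot,k)-(\text{plane wave})$. The short-distance behaviour $\F_{\mu,\nu}^-(x,k)=O(x^{1/2+\mu})$ as $x\searrow 0$, also from Lemma \ref{lem_x}, handles integrability near $x=0$ and ensures no spurious contribution from the origin. The main obstacle will be making the stationary-phase localization rigorous uniformly enough to upgrade pointwise/weak convergence to strong convergence in $L^2(\R_+)$: one must show that the oscillatory integral operators are uniformly bounded and that their difference from the limit tends to zero in operator norm on the dense set, which requires a careful Schur-type or $TT^*$ estimate on the kernel built from the hypergeometric asymptotics rather than just pointwise bounds. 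Once strong convergence on the dense set is established, a standard density and uniform-boundedness argument extends it to all of $L^2(\R_+)$ and identifies the limit with $\big(\F_{\mu,\nu}^\pm\big)^*\F_{\rm D}=W_\pm(H_{\mu,\nu},H_{\rm D})$, completing the proof.
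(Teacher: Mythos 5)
Your reduction to the conjugated momentum-space kernel $\e^{it(k^2-\ell^2)}K^\pm(k,\ell)$ is a genuinely different route from the paper's, but it contains a step that fails as stated. The kernel $K^\pm$ of $\F_{\mu,\nu}^\pm\F_{\rm D}^*$ is not a nice function plus a multiple of $\delta(k-\ell)$: integrating products of plane waves over the half-line produces Cauchy-type singularities $\mathrm{p.v.}\tfrac{1}{k-\ell}$ (and $\tfrac{1}{k+\ell}$) on top of the delta. For such kernels the oscillatory factor does \emph{not} simply ``localize to the diagonal value'': the principal-value part contributes an extra $\mp i\pi\delta(k-\ell)$ in the limit, with a sign depending on the direction of time. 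Your claimed limit --- multiplication by the on-shell value of $K^\pm$ --- is the same for $t\to+\infty$ and $t\to-\infty$, which would force $W_+(H_{\mu,\nu},H_{\rm D})=W_-(H_{\mu,\nu},H_{\rm D})$ and hence $S_{\mu,\nu}=\I$; this contradicts Lemma \ref{lem_lim_1} and \eqref{eq_about_S}, where $\sigma_{\mu,\nu}$ is manifestly non-constant. (Concretely, the naive diagonal coefficient of $K^-$ is $\tfrac{1}{2}\big(1+\overline{\sigma_{\mu,\nu}(k)}\big)$, not $1$; the missing half comes precisely from the singular off-diagonal part and its time-asymmetric limit.) A second, smaller gap: the identity $\e^{itH_{\mu,\nu}}=\big(\F_{\mu,\nu}^\pm\big)^*\e^{itk^2}\F_{\mu,\nu}^\pm$ is false on the point-spectrum subspace, since by \eqref{eq_inv2} one only has $\big(\F_{\mu,\nu}^\pm\big)^*\F_{\mu,\nu}^\pm=\I-\I_p(H_{\mu,\nu})$; the discarded term $\e^{itH_{\mu,\nu}}\I_p(H_{\mu,\nu})\e^{-itH_{\rm D}}$ does vanish strongly (finite rank against a weakly vanishing evolution), but this must be said.

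The paper avoids the distributional kernel altogether by working in the position variable, following Yafaev: by unitarity of $\e^{itH_{\mu,\nu}}$ it suffices to show $\big\|\e^{-itH_{\rm D}}\F_{\rm D}f-\e^{-itH_{\mu,\nu}}\big(\F_{\mu,\nu}^-\big)^*f\big\|\to 0$ for $f\in C_{\rm c}^\infty(\R_+)$; the difference is the function $x\mapsto\int_0^\infty\big[\sqrt{2/\pi}\sin(kx)-\F_{\mu,\nu}^-(x,k)\big]\e^{-itk^2}f(k)\,\d k$, which tends to $0$ pointwise in $x$ by non-stationary phase in $k$, and --- using \eqref{eq_lim_x0} and one integration by parts exploiting that the phase derivative $x-2tk$ is bounded below by $x$ when $t<0$ and $k\in\supp f$ --- is dominated uniformly in $t<0$ by an $L^2$ function of $x$, so dominated convergence concludes. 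If you wish to keep your momentum-space formulation, you must replace the ``diagonal value'' step by a genuine Sokhotski--Plemelj analysis distinguishing the $\tfrac{1}{k-\ell\mp i0}$ contributions for the two time directions; as written, the argument cannot produce the correct (sign-dependent) limit.
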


The following proof is inspired from the proof of \cite[Lem.~3.2]{Yaf}.
Note that we only show the statement for $W_-(H_{\mu,\nu},H_{\rm D})$ since the other
statement can be proved similarly.

\begin{proof}
Since $\F_{\rm D}= \F_{\rm D}^*$, one easily observes that the statement holds
for $W_-(H_{\mu,\nu},H_{\rm D})$ if
\begin{equation}\label{eq_but}
\lim_{t\to - \infty}\big\|\e^{-itH_{\rm D}}\F_{\rm D}f - \e^{-itH_{\mu,\nu}}\big(\F_{\mu,\nu}^-\big)^*f\big\|=0
\end{equation}
is satisfied for all $f\in C^\infty_{\rm c}(\R_+)$. By using the intertwining property of $\F_{\mu,\nu}^-$, which can be inferred from \eqref{eq_inv1}, one gets for $x>0$
\begin{equation}\label{eq_2terms}
\big[\e^{-itH_{\rm D}}\F_{\rm D}f - \e^{-itH_{\mu,\nu}}\big(\F_{\mu,\nu}^-\big)^*f \big](x)
=\int_0^\infty \bigg[ \sqrt{\frac2\pi}\sin(kx) - \F_{\mu,\nu}^-(x,k)\bigg]\e^{-itk^2}f(k) \;\!\d k.
\end{equation}
Successive integrations by parts show that for any fixed $x$, the r.h.s.~of \eqref{eq_2terms} decays faster than any power of $|t|^{-1}$ as $t\to - \infty$.
Then, by the asymptotic expansion provided in \eqref{eq_lim_x0} one infers that
\begin{align}
&\int_0^\infty \bigg[ \sqrt{\frac2\pi}\sin(kx) - \F_{\mu,\nu}^\pm(x,k)\bigg]\e^{-itk^2}f(k) \;\!\d k \label{eq_d0} \\
&= \frac{i}{\sqrt{2\pi}} 
\int_0^\infty \bigg[ \e^{ikx}\big(\sigma_{\mu,\nu}(k)-1\big)+ O(\e^{-2x})\Big]\e^{-itk^2}f(k) \;\!\d k\label{eq_d1} 
\end{align}
where the remainder term is locally uniformly $k$-dependent. Observe now that
$$
\e^{i(kx-tk^2)} =  -i\frac{\partial}{\partial_k}\bigg(\frac{\e^{i(kx-tk^2)}}{x-2tk}\bigg) 
+ i \frac{1}{x-2tk}\e^{i(kx-tk^2)} \frac{2t}{x-2tk}.
$$
Thus, one infers that \eqref{eq_d1} is equal to
\begin{align}
& \frac{1}{\sqrt{2\pi}} 
\int_0^\infty \frac{\partial}{\partial_k}\bigg(\frac{\e^{i(kx-tk^2)}}{x-2tk}\bigg) \big(\sigma_{\mu,\nu}(k)-1\big) f(k)\;\!\d k \label{eq_d2}\\
& - \frac{1}{\sqrt{2\pi}} 
\int_0^\infty  \Big[ \frac{1}{x-2tk}\e^{i(kx-tk^2)} \frac{2t}{x-2tk}
+O(\e^{-2x})\Big]f(k) \;\!\d k. \label{eq_d3}
\end{align}
By an integration by parts, it follows that $|$\eqref{eq_d2}$|$ decays like $\frac{1}{x}$ for $x\to \infty$, uniformly in $t<0$. Since $f$ has a compact support away from $0$, one directly infers
that $|$\eqref{eq_d3}$|$ is also decaying like $\frac{1}{x}$ for $x\to \infty$, uniformly in $t<0$.
It follows that $|$\eqref{eq_d0}$|$ is bounded by a function in $L^2(\R_+)$ independent of $t$.
The statement in \eqref{eq_but} follows then by an application of Lebesgue dominated convergence theorem.
\end{proof}

Based on the previous equality of the two definitions for $W_\pm(H_{\mu,\nu},H_{\rm D})$, we mention a standard result from scattering theory, namely:

\begin{lemma}\label{lem_lim_1}
The following equality holds:
\begin{equation*}
\strong_{t\to - \infty} \e^{itH_{\rm D}} \;\!W_-(H_{\mu,\nu},H_{\rm D}) \;\!\e^{-itH_{\rm D}} = 1
\end{equation*}
while 
\begin{equation*}
\strong_{t\to + \infty} \e^{itH_{\rm D}}\;\! W_-(H_{\mu,\nu},H_{\rm D}) \;\! \e^{-itH_{\rm D}} 
= S_{\mu,\nu}
\end{equation*}
where $S_{\mu,\nu} := W_+(H_{\mu,\nu},H_{\rm D})^* \;\!W_-(H_{\mu,\nu},H_{\rm D})$
denotes the scattering operator.
\end{lemma}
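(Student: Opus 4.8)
The statement is a standard consequence of the intertwining property and the completeness of the wave operators, so the plan is to reduce everything to the behaviour of the unitary family $U_t:=\e^{itH_{\rm D}}\e^{-itH_{\mu,\nu}}$. The intertwining relation $H_{\mu,\nu}\,W_\pm(H_{\mu,\nu},H_{\rm D}) = W_\pm(H_{\mu,\nu},H_{\rm D})\,H_{\rm D}$, which follows from \eqref{eq_inv1} together with the fact that the sine transform $\F_{\rm D}$ diagonalizes $H_{\rm D}$, extends by functional calculus to
$$
\e^{-itH_{\mu,\nu}}\,W_\pm(H_{\mu,\nu},H_{\rm D}) = W_\pm(H_{\mu,\nu},H_{\rm D})\,\e^{-itH_{\rm D}},\qquad t\in\R.
$$
Inserting this into the conjugated operator I would rewrite
$$
\e^{itH_{\rm D}}\,W_-(H_{\mu,\nu},H_{\rm D})\,\e^{-itH_{\rm D}} = U_t\,W_-(H_{\mu,\nu},H_{\rm D}),
$$
so that both limits reduce to evaluating $\strong\,U_t\,W_-(H_{\mu,\nu},H_{\rm D})$ as $t\to-\infty$ and as $t\to+\infty$.

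The crucial auxiliary fact is that $\strong_{t\to\pm\infty}U_t\phi = W_\pm(H_{\mu,\nu},H_{\rm D})^*\phi$ for every $\phi$ in the range of $W_\pm(H_{\mu,\nu},H_{\rm D})$. First I would obtain this weakly: for arbitrary $\chi$ one has $\langle U_t\phi,\chi\rangle = \langle\phi,\e^{itH_{\mu,\nu}}\e^{-itH_{\rm D}}\chi\rangle$, and the time-dependent description of $W_\pm(H_{\mu,\nu},H_{\rm D})$ established in the preceding proposition shows that $\e^{itH_{\mu,\nu}}\e^{-itH_{\rm D}}\chi$ converges to $W_\pm(H_{\mu,\nu},H_{\rm D})\chi$ as $t\to\pm\infty$, whence $\langle U_t\phi,\chi\rangle\to\langle W_\pm(H_{\mu,\nu},H_{\rm D})^*\phi,\chi\rangle$. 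To upgrade this to a strong limit I would use that $U_t$ is unitary and that \eqref{eq_inv2} makes $W_\pm(H_{\mu,\nu},H_{\rm D})$ isometric: indeed $\F_{\rm D}^2=\I$ and $\F^\pm_{\mu,\nu}\big(\F^\pm_{\mu,\nu}\big)^*=\I$ give $W_\pm(H_{\mu,\nu},H_{\rm D})^*W_\pm(H_{\mu,\nu},H_{\rm D})=\I$, so $W_\pm(H_{\mu,\nu},H_{\rm D})^*$ is isometric on the range of $W_\pm(H_{\mu,\nu},H_{\rm D})$. Thus $\|U_t\phi\|=\|\phi\|=\|W_\pm(H_{\mu,\nu},H_{\rm D})^*\phi\|$, and weak convergence to a vector of the same norm yields strong convergence.

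The first equality then follows immediately, since $W_-(H_{\mu,\nu},H_{\rm D})\psi$ lies in the range of $W_-(H_{\mu,\nu},H_{\rm D})$ and $W_-(H_{\mu,\nu},H_{\rm D})^*W_-(H_{\mu,\nu},H_{\rm D})=\I$, giving $\strong_{t\to-\infty}U_t\,W_-(H_{\mu,\nu},H_{\rm D})\psi=\psi$. For the second equality the extra ingredient is completeness: taking adjoints in \eqref{eq_inv2} yields $W_\pm(H_{\mu,\nu},H_{\rm D})\,W_\pm(H_{\mu,\nu},H_{\rm D})^*=\I-\I_p(H_{\mu,\nu})$ for both signs, so the ranges of $W_+(H_{\mu,\nu},H_{\rm D})$ and $W_-(H_{\mu,\nu},H_{\rm D})$ coincide with the absolutely continuous subspace of $H_{\mu,\nu}$. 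Consequently $W_-(H_{\mu,\nu},H_{\rm D})\psi$ also belongs to the range of $W_+(H_{\mu,\nu},H_{\rm D})$, the limit formula with sign $+$ applies, and $\strong_{t\to+\infty}U_t\,W_-(H_{\mu,\nu},H_{\rm D})\psi=W_+(H_{\mu,\nu},H_{\rm D})^*W_-(H_{\mu,\nu},H_{\rm D})\psi=S_{\mu,\nu}\psi$. The one genuinely delicate point is exactly this matching of ranges: it is what licenses passing from the weak to the strong limit of $U_t$ on $W_-(H_{\mu,\nu},H_{\rm D})\psi$, and hence the identification of the limit with the scattering operator; without the completeness encoded in \eqref{eq_inv2} only weak convergence would be guaranteed.
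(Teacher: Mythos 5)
Your proof is correct and is precisely the standard chain-rule argument that the paper invokes without writing it out (the lemma is stated there as ``a standard result from scattering theory'' with no proof given): intertwining reduces the conjugated operator to $U_t W_-$, weak convergence of $U_t$ on $\Ran(W_\pm)$ plus preservation of norms upgrades to strong convergence, and the completeness relation $W_+W_+^*=W_-W_-^*=\I-\I_p(H_{\mu,\nu})$ from \eqref{eq_inv2} lets you apply the $t\to+\infty$ limit to $W_-\psi$ and identify the result with $S_{\mu,\nu}$. All the ingredients you use (the time-dependent characterization of $W_\pm$ and the relations \eqref{eq_inv1}--\eqref{eq_inv2}) are indeed established in the paper before the lemma, so the argument is complete.
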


Let us observe that the scattering operator mentioned in the previous statement can be expressed in terms of the function $\sigma_{\mu,\nu}$ introduced in the previous section, namely
\begin{equation}\label{eq_about_S}
S_{\mu,\nu} = \F_{\rm D}\;\! \sigma_{\mu,\nu}(K) \;\!\F_{\rm D} = \sigma_{\mu,\nu}\big(\sqrt{H_{\rm D}}\big)
\end{equation}
with  $\sigma_{\mu,\nu}(K)$ the multiplication operator by the function 
$\sigma_{\mu,\nu}$ in $L^2(\R_+)$.
Indeed, one easily observes that
$\F_{\mu,\nu}^-(x,k) = \sigma_{\mu,\nu}(k)\;\!\F_{\mu,\nu}^+(x,k)$ for any $x,k>0$. 
Then, from the definition of $S_{\mu,\nu}$ one infers that 
$$
S_{\mu,\nu}= \F_{\rm D} \;\!\F_{\mu,\nu}^+ \;\!\big(\F_{\mu,\nu}^-\big)^*\;\!\F_{\rm D}
=  \F_{\rm D}\;\!\sigma_{\mu,\nu}(K)\;\!\F_{\mu,\nu}^-\;\!\big(\F_{\mu,\nu}^-\big)^*\F_{\rm D}
=  \F_{\rm D}\;\!\sigma_{\mu,\nu}(K)\;\!\F_{\rm D},
$$
where the first relation of \eqref{eq_inv2} has been used for the last equality. 
Note that the second equality in \eqref{eq_about_S} corresponds to the diagonalization of the operator $H_{\rm D}$.

We still introduce another operator which frequently appears in the framework of scattering theory of Schr\"odinger operators. This operator will also play a central role in the next section.
Let $\{U_\tau\}_{\tau\in\R}$ denote the unitary group of dilations
acting on $f\in L^2(\R_+)$
as $\big(U_\tau f\big)(x) = \e^{\tau/2}f(\e^\tau x)$.
The self-adjoint generator of the strongly continuous dilation group is denoted by $A$.

It turns out that the conjugation of the wave operator by this group has a quite interesting feature. More precisely, one easily observes that the following equality holds: $U_{\tau}\;\! \F_{\rm D} \;\!U_{\tau} = \F_{\rm D}$ for any $\tau \in \R$. Thus one infers that
$$
U_{-\tau} \;\!W_{\mu,\nu}^-\;\! U_{\tau} 
= U_{-\tau}\;\! \big(\F_{\mu,\nu}^-\big)^*\;\! U_{-\tau}\;\! U_{\tau}\;\! \F_{\rm D}\;\! U_{\tau}
= U_{-\tau}\;\! \big(\F_{\mu,\nu}^-\big)^*\;\! U_{-\tau} \;\! \F_{\rm D}.
$$ 
Finally, the action of the operator $U_{-\tau}\;\! \big(\F_{\mu,\nu^-}\big)^*\;\! U_{-\tau}$ can be computed explicitly. Indeed, on any compactly supported $f\in L^2(\R_+)$ and for $x>0$
one has:
\begin{align*}
\big[U_{-\tau}\;\! \big(\F_{\mu,\nu}^-\big)^*\;\! U_{-\tau} f\big](x) 
& = \e^{-\tau/2} \big[\big(\F_{\mu,\nu}^-\big)^*\;\! U_{-\tau} f\big](\e^{-\tau}x) \\
& = \e^{-\tau/2} \int_0^\infty \F_{\mu,\nu}^- (\e^{-\tau} x,k)[U_{-\tau} f](k)\;\!\d k \\
& = \e^{-\tau} \int_0^\infty \F_{\mu,\nu}^- \big(\e^{-\tau }x,k\big)f(\e^{-\tau} k)\;\!\d k \\
& = \int_0^\infty \F_{\mu,\nu}^- \big(\e^{-\tau}x,\e^{\tau}k\big)f(k)\;\!\d k.
\end{align*}
If we set $\epsilon := \e^{-\tau}$, one recognizes the kernel already
mentioned and studied in Lemma \ref{lem_dilation}.

\begin{remark}\label{rem_weak}
Based on the above expression and on the content of Lemma \ref{lem_dilation}, we expect
the following convergences in a suitable topology (strong, weak, or even weaker) but we were 
not able to prove them:
\begin{equation}\label{eq_conj1}
U_{-\tau}\;\! W_-(H_{\mu,\nu},H_{\rm D}) \;\! U_{\tau} 
\longrightarrow  \e^{-i \frac{\pi}{2}(\mu-\frac{1}{2})} \F_\mu\F_{\rm D}\qquad \hbox{ as } \tau\to \infty 
\end{equation}
and
\begin{equation}\label{eq_conj2}
U_{-\tau}  \;\!W_-(H_{\mu,\nu},H_{\rm D}) \;\! U_{\tau} 
\longrightarrow \begin{cases} i\F_{\rm N}\F_{\rm D} & \hbox{ if } \beta \in -\NN, \\
1 & \hbox{ if } \beta \not \in -\NN,\end{cases} \qquad \hbox{ as } \tau\to -\infty.
\end{equation}
In fact, the main difficulty for getting these limits is coming from the remainder term
$\Phi_1$ appearing in the expansion provided in \cite[Thm.~3.1]{KO}
and already mentioned in the previous section.
\end{remark}

%%%%%%%%%%%%%%%%%%%%%%%%%%%%%%%%%%%%
\section{C${}^*$-algebras and an index theorem}\label{sec_index}
%%%%%%%%%%%%%%%%%%%%%%%%%%%%%%%%%%%%

In this section, we gather the necessary information for motivating and proving
an index theorem.

First of all, based on the expressions obtained in Lemma \ref{lem_dilation}, we introduce two operators which are very specific functions of 
the generator of dilation introduced in the previous section.
Note that bounded and continuous functions of a self-adjoint operator can be obtained by
standard functional calculus.
The following equalities have been proved in  \cite[Sec.~4]{DR}~:
\begin{align}
\label{eq_FND}i\F_{\rm N}\F_{\rm D} 
& = - \tanh(\pi A)+ i \cosh(\pi A)^{-1}, \\
\F_\mu \F_{\rm D} 
\nonumber & = 
\frac{\Gamma\big(\frac{\mu+1}{2}-i \frac{A}{2}\big)}{\Gamma\big(\frac{\mu+1}{2}+i\frac{A}{2}\big)}
\frac{\Gamma\big(\frac{3}{4}+i\frac{A}{2}\big)}{\Gamma\big(\frac{3}{4}-i \frac{A}{2}\big)}=:\vartheta_{\mu,\rm D}(A).
\end{align}

Continuous and bounded functions of the generator of dilations, together with bounded and continuous functions of the operator $H_D$, as in equation \eqref{eq_about_S}, already appeared in several examples of scattering systems, see for example \cite{I, IR, NPR, RTZ} for some recent publications and \cite{Ri} for a survey paper.
It turns out that such operators can be organized in a $C^*$-algebraic framework, which is briefly recalled. We refer to \cite[Sec.~4]{Ri} for more explanations and insight.

Let us denote by $C\big([-\infty,+\infty]\big)$ the algebra of continuous functions on $\R$ having limits at $\pm \infty$. Similarly, let $C\big([0,+\infty]\big)$ be the algebra of continuous functions on $\R_+$ having limits at $0$ and at $+\infty$. Wich such functions and by functional calculus, we generate the $C^*$-subalgebra $\EE$ of $\B\big(L^2(\R_+)\big)$ by
$$
\EE :=C^*\bigg(\eta_i(A)\;\!\psi_i(H_{\rm D})\mid \eta_i \in C\big([-\infty,+\infty]\big), \psi_i \in C\big([0,+\infty]\big) \bigg).
$$
If we restrict the set of functions, namely if we consider
$$
\JJ :=C^*\bigg(\eta_i(A)\;\!\psi_i(H_{\rm D})\mid \eta_i \in C_0\big((-\infty,+\infty)\big), \psi_i \in C_0\big((0,+\infty)\big) \bigg),
$$
then it turns out that $\JJ$ coincides with the ideal $\K\big(L^2(\R_+)\big)$ of compact operators on $L^2(\R_+)$, and that
the quotient $C^*$-algebra $ \EE/\JJ$ has a simple description, namely $\EE/\JJ\cong C(\square)$ with 
$C(\square)$ the set of continuous functions on the edges of a square. This algebra can be seen as a subalgebra of 
$$
C\big([-\infty,+\infty]\big)\oplus C\big([0,+\infty]\big)\oplus C\big([-\infty,+\infty]\big)\oplus C\big([0,+\infty]\big)
$$
with its elements $\Lambda=(\Lambda_1,\Lambda_2,\Lambda_3,\Lambda_4)$ satisfying the continuity conditions $\Lambda_1(+\infty)=\Lambda_2(0)$, $\Lambda_2(+\infty)=\Lambda_3(+\infty)$, $\Lambda_3(-\infty)=\Lambda_4(+\infty)$, and $\Lambda_4(0)=\Lambda_1(-\infty)$.
We refer to Figure \ref{square} for a better visualization of these restrictions.

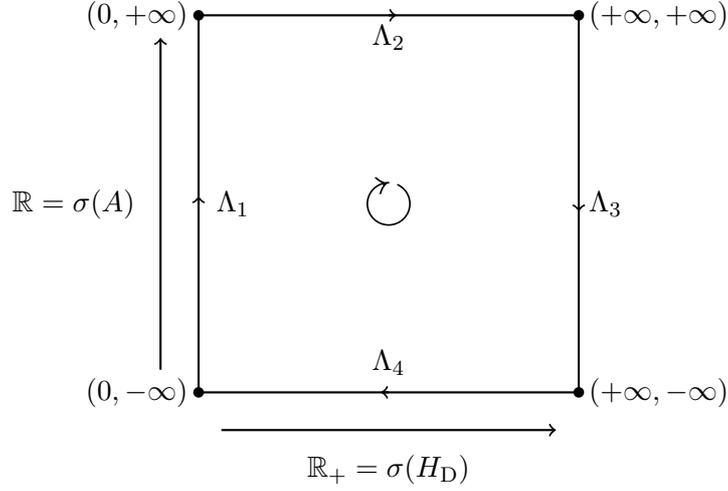
\begin{figure}[htb] 
\centering            
\begin{tikzpicture} 

\draw (0,0) node[left]{$(0,-\infty)$};
\fill (0,0) circle [radius=0.07];

\draw (0,5) node[left]{$(0,+\infty)$};
\fill (0,5) circle [radius=0.07];

\draw (5,0) node[right]{$(+\infty,-\infty)$};
\fill (5,0) circle [radius=0.07];

\draw (5,5) node[right]{$(+\infty,+\infty)$};
\fill (5,5) circle [radius=0.07];

\draw (2.5,-0.7) node[below]{$\mathbb{R}_+=\sigma(H_{\mathrm D})$};
\draw (-0.7,2.5) node[left]{$\mathbb{R}=\sigma(A)$};

\draw (2.5,2.5) node{{\Huge$\circlearrowright$}};

\draw[thick, =10pt] (0,5)--(0,2.5);
\draw[thick,<- =10pt] (0,2.6)--(0,0);
\draw[thick, =10pt] (0,0)--(2.5,0);
\draw[thick,<- =10pt] (2.4,0)--(5,0);
\draw[thick, =10pt] (5,0)--(5,2.5);
\draw[thick,<- =10pt] (5,2.4)--(5,5);
\draw[thick, =10pt] (5,5)--(2.5,5);
\draw[thick, <- =10pt] (2.6,5)--(0,5);

\draw (2.5,0.1) node[above]{$\Lambda_4$};
\draw (5,2.5) node[right]{$\Lambda_3$};
\draw (2.5,5) node[below]{$\Lambda_2$};
\draw (0.1,2.5) node[right]{$\Lambda_1$};

\draw[thick, ->= 10pt] (0.3,-0.5)--(4.7,-0.5) node[right] {};
\draw[thick, ->= 10pt] (-0.5,0.3)--(-0.5,4.7) node[above] {};
\end{tikzpicture}

\caption{The boundary $\square$ of $[0,+\infty]\times[-\infty,+\infty]$ and its orientation.}\label{square}

\end{figure}

It has been proved in the references mentioned above that the wave operators $W_\pm(H,H_0)$ belong to the algebra $\EE$ for several pairs of Schr\"odinger's type operators $(H,H_0)$, similar
to the current pair $(H_{\mu,\nu},H_{\rm D})$.
Once the wave operators are known to belong to the $C^*$-algebra $\EE$, a standard $K$-theoretic argument leads directly to some index theorems. More precisely, it leads to a topological version of Levinson's theorem \cite{Lev}, initially introduced in \cite{KR} and further developed in \cite{Ri}. 
It is also shown in \cite{KR1D} that the affiliation to $\EE$ correspond to precise propagation properties, stronger than the strong limits mentioned in Lemma \ref{lem_lim_1}  
but implying them.

Unfortunately, because of the complicated structure of the ${}_2F_1$-function we have not been able show so far that $W_\pm(H_{\mu,\nu},H_{\rm D})$ belong to the $C^*$-algebra $\EE$. 
Nevertheless, the four restrictions 
$(\Lambda_1,\Lambda_2,\Lambda_3,\Lambda_4)$ on the edges of the square mentioned above can be guessed from Lemma \ref{lem_lim_1} and Remark \ref{rem_weak}, and it turns out that an index theorem can be proved explicitly. 
We now develop this program, starting with the conjecture, and providing the necessary expressions for the proof the index theorem. Let us stress that the final result does not depend on this conjecture.

\begin{conjecture}\label{conjecture}
For any $\mu,\nu\geq0$ one has
$W_-(H_{\mu,\nu},H_{\rm D}) \in \EE$.
\end{conjecture}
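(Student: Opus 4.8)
The plan is to follow the strategy used to establish affiliation of wave operators to $\EE$ in \cite{IR, NPR, RTZ} and reviewed in \cite{Ri}: produce an explicit operator $T\in\EE$, built as a continuous function of $A$ and of $H_{\rm D}$, such that $W_-(H_{\mu,\nu},H_{\rm D})-T$ is compact. Since $\JJ=\K\big(L^2(\R_+)\big)$ is contained in $\EE$, this yields $W_-(H_{\mu,\nu},H_{\rm D})\in\EE$. The natural starting point is the identity $W_-(H_{\mu,\nu},H_{\rm D})=\big(\F_{\mu,\nu}^-\big)^*\F_{\rm D}$, whose kernel in the position representation reads
\begin{equation*}
W_-(H_{\mu,\nu},H_{\rm D})(x,y)=\sqrt{\frac{2}{\pi}}\int_0^\infty \F_{\mu,\nu}^-(x,k)\sin(ky)\;\!\d k.
\end{equation*}
Inserting the decomposition \eqref{eq_F^-} of $\F_{\mu,\nu}^-$ into a free contribution and a contribution carrying the scattering data $\sigma_{\mu,\nu}$, one separates $W_-$ into a piece comparable to the identity and a piece comparable to $S_{\mu,\nu}=\sigma_{\mu,\nu}(\sqrt{H_{\rm D}})$, see \eqref{eq_about_S}.

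The second step is to identify the image of $W_-$ in the quotient $\EE/\JJ\cong C(\square)$, that is, the four boundary restrictions $(\Lambda_1,\Lambda_2,\Lambda_3,\Lambda_4)$. The two edges attached to the spectrum of $A$ are already predicted by Lemma \ref{lem_dilation} and Remark \ref{rem_weak}. Since $U_{-\tau}H_{\rm D}U_{\tau}=\e^{2\tau}H_{\rm D}$, the dilation conjugation drives $H_{\rm D}$ to $+\infty$ as $\tau\to+\infty$, producing the high-energy edge $\Lambda_3=\e^{-i\frac{\pi}{2}(\mu-\frac12)}\vartheta_{\mu,\rm D}(A)$ from \eqref{eq_conj1}, and to $0$ as $\tau\to-\infty$, producing the low-energy edge $\Lambda_1=-\tanh(\pi A)+i\cosh(\pi A)^{-1}$ when $\beta\in-\NN$ and $\Lambda_1=1$ otherwise, from \eqref{eq_conj2} and \eqref{eq_FND}. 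The remaining edges $\Lambda_2$ (at $A=+\infty$) and $\Lambda_4$ (at $A=-\infty$) are functions of $H_{\rm D}$ alone and would be extracted from the large- and small-$x$ asymptotics of the kernel; comparing \eqref{eq_lim_x0} with $\sqrt{2/\pi}\sin(kx)$ shows that $\Lambda_2$ carries precisely the scattering function $\sigma_{\mu,\nu}$ on the outgoing wave. One would then verify, using the Stirling asymptotics of the gamma factors defining $\vartheta_{\mu,\rm D}$ and $\sigma_{\mu,\nu}$, that the corner values match the continuity conditions $\Lambda_1(+\infty)=\Lambda_2(0)$, $\Lambda_2(+\infty)=\Lambda_3(+\infty)$, $\Lambda_3(-\infty)=\Lambda_4(+\infty)$ and $\Lambda_4(0)=\Lambda_1(-\infty)$, thereby exhibiting a genuine element of $C(\square)$ and hence a candidate $T\in\EE$.

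The decisive and hardest step is the compactness of $W_-(H_{\mu,\nu},H_{\rm D})-T$. This requires upgrading the pointwise dilation limits of Lemma \ref{lem_dilation} to a statement about the full operator, controlled uniformly and jointly in $x$ and $k$, rather than in each corner separately. The only available handle on the relevant off-diagonal behavior is the uniform asymptotic expansion of the hypergeometric function from \cite[Thm.~3.1]{KO}, whose leading term reproduces the edge functions above; but the joint $(x,k)$-dependence of its remainder $\Phi_1(-ik/2,2x)$ --- in particular its behavior when $x$ and $k$ approach the corners of the square simultaneously --- is delicate, and the decay estimate of \cite[Eq.~(3.7)]{KO} does not, as it stands, guarantee that the integral operator with this remainder as kernel is compact. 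This is exactly the obstruction already flagged in Remark \ref{rem_weak}. All other ingredients being in place, the affiliation $W_-(H_{\mu,\nu},H_{\rm D})\in\EE$ would follow at once from a uniform compactness estimate on the $\Phi_1$-contribution, which is the reason the statement is left here as a conjecture.
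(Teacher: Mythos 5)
You have not proved the statement, and you should be aware that the paper does not prove it either: it is stated as Conjecture \ref{conjecture} precisely because the authors were unable to close the same step at which your argument stops. Your outline is a faithful reconstruction of the intended route --- write $W_-(H_{\mu,\nu},H_{\rm D})=\big(\F_{\mu,\nu}^-\big)^*\F_{\rm D}$, read off the four candidate edge restrictions $(\Lambda_1,\Lambda_2,\Lambda_3,\Lambda_4)$ from Lemma \ref{lem_lim_1}, Lemma \ref{lem_dilation} and Remark \ref{rem_weak}, check the corner compatibilities so that they assemble into an element of $C(\square)\cong\EE/\JJ$, and then show that the wave operator differs from a lift $T\in\EE$ of that element by a compact operator. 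This is exactly the scheme of \cite{IR, NPR, RTZ} that the authors have in mind, and your identification of the edges agrees with the functions $\Lambda_1,\dots,\Lambda_4$ defined in Section \ref{sec_index}.

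The genuine gap is the one you name yourself: the compactness of $W_-(H_{\mu,\nu},H_{\rm D})-T$, equivalently the upgrade of the pointwise limits of Lemma \ref{lem_dilation} to operator limits as in \eqref{eq_conj1} and \eqref{eq_conj2}. Everything before that point is a description of what the answer should look like, not an argument; and the decisive estimate --- joint control in $(x,k)$ of the remainder $\Phi_1(-ik/2,2x)$ from \cite[Thm.~3.1]{KO} strong enough to yield compactness of the corresponding integral operator --- is asserted to be missing rather than supplied. Since the paper's Remark \ref{rem_weak} flags exactly this obstruction and the index theorem of Theorem \ref{thm_index} is deliberately made independent of the conjecture, your proposal adds no new ingredient beyond what the authors already record: it correctly diagnoses why the statement is open, but it does not establish it. If you want to make progress, the place to work is a quantitative bound on $\Phi_1$ that is uniform as $x\to 0$ with $kx$ bounded (the corner where the Bessel regime and the low-energy regime meet), or alternatively an affiliation argument in the larger algebra $\EE'$ of Remark \ref{rem_Cordes}, which tolerates an extra dependence on $X$.
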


Let us now define the following four functions, for $s\in [-\infty,+\infty]$ and $k \in [0,+\infty]$:
\begin{align*}
\Lambda_{1}(s)&:=\begin{cases} -\tanh(\pi s)+i\cosh(\pi s)^{-1} &  \hbox{ if } \beta \in -\NN, \\
1 & \hbox{ if } \beta \not \in -\NN,\end{cases} \\
\Lambda_{2}(k)&:= \sigma_{\mu,\nu}(k), \\
\Lambda_{3}(s)&:= \e^{-i \frac{\pi}{2}(\mu-\frac{1}{2})} \vartheta_{\mu,\rm D}(s), \\
\Lambda_{4}(k)&:= 1.
\end{align*}
The definition of these functions is motivated by Lemma \ref{lem_lim_1} and Remark \ref{rem_weak}.
If the above conjecture is proved, these functions are directly obtained by the restrictions of the symbol of $W_-(H_{\mu,\nu},H_{\rm D})$ on the edges of the square, or in other words 
they are obtained by considering the image of $W_-(H_{\mu,\nu},H_{\rm D})$ in the quotient algebra $\EE/\JJ$.
Note that if the conjecture is satisfied, then the two limits \eqref{eq_conj1} and \eqref{eq_conj2} would also hold (in a suitable topology).
However, even without the conjecture, we can still define the functions mentioned above and further proceed with explicit computations. 

By setting $\T:=\{z\in \C\mid |z|=1\}$,  one observes that
$\Lambda_1 \in C\big([-\infty,+\infty];\T\big)$, 
that $\sigma_{\mu,\nu}\in  C\big([0,+\infty];\T\big)$ with $\sigma_{\mu,\nu}(+\infty)=\e^{-i \pi(\mu-\frac{1}{2})}$ and with $\sigma_{\mu,\nu}(0)$ provided in equation \eqref{eq_value_0}, 
and that 
$\vartheta_{\mu,\rm D}\in C\big([-\infty,+\infty];\T\big)$
with $\vartheta_{\mu,\rm D}(-\infty)= \e^{i \frac{\pi}{2}(\mu-\frac{1}{2})}$ and $\vartheta_{\mu,\rm D}(+\infty)= \e^{-i \frac{\pi}{2}(\mu-\frac{1}{2})}$, 
as shown in the proof of \cite[Thm.~4.10]{DR}. 
Thus, if we define the function 
$$
\Lambda_{\mu,\nu}:=(\Lambda_{1}, \Lambda_{2}, \Lambda_{3}, \Lambda_{4}): 
\square \to \C
$$
as described in the above figure, one easily checks that this function is continuous and has image in $\T$.
 
Since the continuous function $\Lambda_{\mu,\nu}$ is defined on the closed curve $\square$ and
takes values in the set $\T$, its winding number is well defined. 
Note that we shall follow the curve $\square$ clockwise.
In addition, since
the set $\square$ is made of four distinct parts, we can look at the partial contributions to the winding number provided by each part. 
More precisely, if we set $\Lambda_j = \e^{-2\pi i \varphi_j}$ for some real continuous
function $\varphi_j$, 
then we can define the partial signed contributions to the winding number by 
\begin{align*}
\omega_1 &:= \varphi_1(+\infty)-\varphi_1(-\infty), \\
\omega_2 &:= \varphi_2(+\infty)-\varphi_2(0), \\
\omega_3 &:= \varphi_3(-\infty)-\varphi_3(+\infty), \\ 
\omega_4 &:= \varphi_4(0)-\varphi_4(+\infty).  
\end{align*}
Then, the full winding number is given by
$$
\wn(\Lambda_{\mu,\nu})=\omega_1+\omega_2+\omega_3+\omega_4
$$

Clearly, $\omega_4 = 0$,
and also
$$
\omega_1 =  \begin{cases} -\frac{1}{2} &  \hbox{ if } \beta \in -\NN, \\
0 & \hbox{ if } \beta \not \in -\NN,\end{cases}
\quad \hbox{ and } \quad
\omega_3 =  -\frac{1}{2}\Big(\mu-\frac{1}{2}\Big).
$$
Note that the technique for the computation of $\omega_3$ can be borrowed from 
\cite[Sec.~III]{NPR}.

For the contribution of $\Lambda_{2}$, different cases have to be considered, depending on the value of the parameter $\beta$. First, by a slight adaptation of the proof of \cite[Lem.~4]{KPR},
one infers that for $a,b>0$ the function 
$$
[0,+\infty]\ni k \mapsto
\frac{\Gamma(a- ik/2)\Gamma(b+ik/2)}{\Gamma(a+ ik/2)\Gamma(b-ik/2)} \in \T
$$
provides a partial signed contribution equal to $\frac{1}{2}(a-b)$.
As a consequence, for $\beta>0$ one gets
$$
\omega_2 = \frac{1}{2}\Big(\alpha + \beta -1 -\frac{1}{2}\Big)=\frac{1}{2}\Big(\mu-\frac{1}{2}\Big).
$$
If $\beta = -n+\varepsilon$ for some $n\in \NN$ and $\varepsilon \in (0,1)$, 
observe that 
\begin{align*}
\sigma_{\mu,\nu}(k) & =
\frac{\Gamma(\alpha- ik/2)\Gamma(-n+\varepsilon- ik/2)\Gamma(1+ik/2)\Gamma(1/2+ik/2)}{\Gamma(\alpha+ ik/2)\Gamma(-n-\varepsilon+ ik/2)\Gamma(1-ik/2)\Gamma(1/2-ik/2)} \\
& = \frac{\Gamma(\alpha- ik/2)\Gamma(\varepsilon- ik/2)\Gamma(1+ik/2)\Gamma(1/2+ik/2)}{\Gamma(\alpha+ ik/2)\Gamma(\varepsilon+ ik/2)\Gamma(1-ik/2)\Gamma(1/2-ik/2)} 
\prod_{\ell=1}^{n}\frac{-\ell + \varepsilon + ik/2}{-\ell + \varepsilon -ik/2}.
\end{align*}
As a consequence, one obtains 
$$
\omega_2 
= \frac{1}{2}\Big(\alpha -1+\varepsilon-\frac{1}{2}\Big)+\frac{n}{2} 
= \frac{1}{2}\Big(\mu+n-\frac{1}{2}\Big)+\frac{n}{2}
=n+\frac{1}{2}\Big(\mu-\frac{1}{2}\Big),
$$
since $\beta=-n+\varepsilon \Longleftrightarrow \nu = \mu +1+2n-2\varepsilon$.
Finally, when $\beta = -n$ for some $n\in \NN$ one has
\begin{align*}
\sigma_{\mu,\nu}(k) & =
\frac{\Gamma(\alpha- ik/2)\Gamma(-n- ik/2)\Gamma(1+ik/2)\Gamma(1/2+ik/2)}{\Gamma(\alpha+ ik/2)\Gamma(-n+ ik/2)\Gamma(1-ik/2)\Gamma(1/2-ik/2)} \\
& = \frac{\Gamma(\alpha- ik/2)\Gamma(1/2+ik/2)}{\Gamma(\alpha+ ik/2)\Gamma(1/2-ik/2)} 
\prod_{\ell=0}^{n}\frac{-\ell  + ik/2}{-\ell -ik/2}
\end{align*}
from which one infers that
$$
\omega_2 
= \frac{1}{2}\Big(\alpha-\frac{1}{2}\Big)+\frac{n}{2} 
= \frac{1}{2}\Big(\mu+n+\frac{1}{2}\Big)+\frac{n}{2}
=n+\frac{1}{2}\Big(\mu+\frac{1}{2}\Big).
$$

By collecting the information obtained above, one finally gets

\begin{theorem}\label{thm_index}
For any $\mu,\nu\geq 0$ one has
\begin{equation*}
\wn(\Lambda_{\mu,\nu}) = \#\sigma_{\rm p}(H_{\mu,\nu}) = - \ind\big(W_-(H_{\mu,\nu},H_{\rm D})\big),
\end{equation*}
where $\wn(\Lambda_{\mu,\nu})$ denotes the winding number of the function $\Lambda_{\mu,\nu}$, and $\ind\big(W_-(H_{\mu,\nu},H_{\rm D})\big)$ the Fredholm index
of the wave operators $W_-(H_{\mu,\nu},H_{\rm D})$.
\end{theorem}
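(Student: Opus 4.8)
The plan is to prove the two equalities separately: first $\wn(\Lambda_{\mu,\nu}) = \#\sigma_{\rm p}(H_{\mu,\nu})$ by assembling the partial winding contributions computed just above, and then $\#\sigma_{\rm p}(H_{\mu,\nu}) = -\ind\big(W_-(H_{\mu,\nu},H_{\rm D})\big)$ as a functional-analytic consequence of the relations \eqref{eq_inv2}. Crucially, neither step relies on Conjecture \ref{conjecture}: the function $\Lambda_{\mu,\nu}$ has been defined explicitly on $\square$, and the properties \eqref{eq_inv2} of the generalized Fourier transforms are already established, so no affiliation to $\EE$ is needed.

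For the first equality, I would compute $\wn(\Lambda_{\mu,\nu}) = \omega_1 + \omega_2 + \omega_3 + \omega_4$ in the three regimes for $\beta = \frac{1+\mu-\nu}{2}$ singled out above, namely $\beta > 0$, $\beta = -n + \varepsilon$ with $n \in \NN$ and $\varepsilon \in (0,1)$, and $\beta = -n$ with $n \in \NN$. In each regime I would insert the explicit values of $\omega_1,\omega_2,\omega_3,\omega_4$ recorded above and observe the systematic cancellation: the term $-\frac{1}{2}(\mu - \frac{1}{2})$ coming from $\omega_3$ annihilates the $\mu$-dependent piece of $\omega_2$, leaving a manifestly integer result. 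Rewriting the bound-state formula of Proposition \ref{number of bound states} via the identity $\frac{\nu-\mu-1}{2} = -\beta$, so that $\#\sigma_{\rm p}(H_{\mu,\nu}) = 0$ for $\beta > 0$ and $\#\sigma_{\rm p}(H_{\mu,\nu}) = \lceil -\beta\rceil$ for $\beta \leq 0$, I would then match the two sides case by case. The delicate regime is $\beta \in -\NN$, where the jump $\omega_1 = -\frac{1}{2}$ must be combined with the modified value of $\omega_2$ (featuring $\frac{1}{2}(\mu + \frac{1}{2})$ in place of $\frac{1}{2}(\mu - \frac{1}{2})$); I expect checking that these two half-integer effects conspire to restore the correct integer $n$ to be the main bookkeeping point, since the bulk of the genuine analytic work already resides in the preceding computation of the $\omega_j$.

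For the second equality, I would use that $\F_{\rm D}$ is unitary on $L^2(\R_+)$ together with the definition $W_-(H_{\mu,\nu},H_{\rm D}) = \big(\F_{\mu,\nu}^-\big)^* \F_{\rm D}$. The first relation in \eqref{eq_inv2} gives $W_-^* W_- = \F_{\rm D}^*\,\F_{\mu,\nu}^-\big(\F_{\mu,\nu}^-\big)^*\,\F_{\rm D} = \I$, so $W_-$ is an isometry and in particular $\ker(W_-) = \{0\}$, whereas the second relation gives $W_- W_-^* = \big(\F_{\mu,\nu}^-\big)^* \F_{\mu,\nu}^- = \I - \I_p(H_{\mu,\nu})$, so the orthogonal complement of the (closed) range of $W_-$ is exactly the range of the eigenprojection $\I_p(H_{\mu,\nu})$. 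Since the eigenvalues of $H_{\mu,\nu}$ are simple, as noted in the proof of Proposition \ref{number of bound states}, this range has dimension $\#\sigma_{\rm p}(H_{\mu,\nu})$, and $W_-$ is therefore Fredholm with $\dim\ker(W_-) = 0$ and $\dim\ker(W_-^*) = \#\sigma_{\rm p}(H_{\mu,\nu})$. Consequently $\ind\big(W_-(H_{\mu,\nu},H_{\rm D})\big) = -\#\sigma_{\rm p}(H_{\mu,\nu})$, which is precisely the second equality and, together with the first, completes the proof.
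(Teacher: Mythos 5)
Your proposal is correct and follows essentially the same route as the paper: the first equality is obtained by summing the partial contributions $\omega_1,\dots,\omega_4$ and matching them case by case (for $\beta>0$, $\beta\leq 0$ with $\beta\notin-\NN$, and $\beta\in-\NN$) against Proposition \ref{number of bound states}, and the second equality is the standard scattering-theory fact that you spell out via \eqref{eq_inv2}, namely that $W_-$ is an isometry whose cokernel is the range of $\I_p(H_{\mu,\nu})$. The only difference is one of presentation: the paper compresses both steps into two sentences, whereas you make the cancellations and the Fredholm argument explicit — and your arithmetic in the delicate case $\beta=-n$ indeed yields $-\tfrac12+n+\tfrac12(\mu+\tfrac12)-\tfrac12(\mu-\tfrac12)=n=\lceil-\beta\rceil$, as required.
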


As already mentioned, this statement is independent of Conjecture \ref{conjecture}, since 
its proof is based on an explicit computation. 

\begin{proof}
The proof of the first equality consists simply in comparing the result obtained in 
Proposition \ref{number of bound states} with the sum of the contributions
$\omega_j$ for $j\in \{1,2,3,4\}$. The three different cases for $\beta >0$, $\beta\in -\NN$, and $\beta \leq 0$ but $\beta \not \in -\NN$ have to be checked separately.
The second equality is a standard result of scattering theory.
\end{proof}

We finally mention a slightly more general framework which could replace the algebra $\EE$, if ever the Conjecture \ref{conjecture} can not be proved. 

\begin{remark}\label{rem_Cordes}
In \cite[Sec.~V.7]{Cordes}, Cordes introduced the following $C^*$-subalgebra
of $\B\big(L^2(\R_+)\big)$:
$$
\EE':=C^*\Big(a_i(A)b_i(X)c_i(H_{\rm N})\mid a_i\in C\big([-\infty,+\infty]\big), \ b_i,c_i\in C\big([0,+ \infty]\big)\Big),
$$ 
where $H_{\rm N}$ stands for the Neumann Laplacian on $\R_+$.
Now, since $\F_{\rm D}\F_{\rm N}H_{\rm N}\F_{\rm N}\F_{\rm D}=H_{\rm D}$ with $\F_{\rm N}\F_{\rm D}$ computed in \eqref{eq_FND}, one infers that
this algebra is equal to the $C^*$-algebra
$$
C^*\Big(a_i(A)b_i(X)c_i(H_{\rm D})\mid a_i\in C\big([-\infty,+\infty]\big), \ b_i,c_i\in C\big([0,+\infty]\big)\Big).
$$
In addition, it is shown in \cite[Thm.~V.7.3]{Cordes} that the quotient algebra $\EE/\K\big(L^2(\R_+)\big)$ is isomorphic to $C( \hexagon)$, the set of continuous functions defined on the edges of a hexagon. Among the six parts of $\hexagon$, four of them correspond to the four parts of $\square$, but there exist also two additional ones due to the presence of functions of $X$ in the algebra $\EE'$. So far, this additional freedom has not been necessary for the affiliation
of the wave operators for any scattering system.
\end{remark}

\appendix

\section{Reduction}\label{sec_reduction}

In this appendix, we recall the decomposition of the Casimir operator of $\mathfrak{sl}(2,\R)$ acting on $L^2\big(\SL(2,\R)\big)$. This decomposition leads to the operators \eqref{eq_main_op}.

Let $G$ be the group $\SL(2,\R)$ and $\HH=L^2(G,\d g)$ with $\d g$ the Haar measure on $G$. 
For $m,n\in\Z$ we say that a function $f:G\to\C$ is a \emph{spherical function of type $(m,n)$} if
it satisfies
\begin{equation}\label{app_type_mn_function}
f(u_{\theta_1}\;\! g\;\!u_{\theta_2})=\e^{i(m\theta_1+n\theta_2)}f(g)
\end{equation} 
for any $g\in G$ and $\theta_1,\theta_2\in[0,2\pi)$, where 
$$
u_\theta:=\begin{pmatrix}\cos(\theta) & \sin(\theta) \\ -\sin(\theta) & \cos(\theta) \\ \end{pmatrix}\in K:=\mathrm{SO}(2).
$$ 
We set $\HH_{m,n}$ for the subspace of $\HH$ consisting of functions of type $(m,n)$. 
However, note that if $m-n\notin2\Z$, then any $f\in \HH_{m,n}$ vanishes everywhere.
Indeed, in such a case one has for any $g\in G$
$$
f(g)=f(u_{\pi}\;\!g\;\!u_{\pi})=\e^{i(m+n)\pi}f(g)=(-1)^{m+n}f(g)
= (-1)^{2n + (m-n)}f(g)=-f(g).
$$
Also, for $x\in \R$ let us set 
$$
a_x=\begin{pmatrix} \e^{x} & 0 \\ 0 & \e^{-x} \\ \end{pmatrix}.
$$
Then by considering $\theta_1=\pi/2$ and $\theta_2=3\pi/2$ in equality \eqref{app_type_mn_function}, one gets the relation
$$
f(a_{-x})=\e^{\frac{\pi i}{2}(m-n)}f(a_x),\qquad\forall x\in\R.
$$
Therefore, the function $x\mapsto f(a_x)$ is even if $f\in \HH_{m,n}$ with $m-n\in 4\Z$, and odd otherwise.

We now recall the Cartan decomposition:
for any $g\in G\setminus K$ there exist unique $x>0$, $\theta_1\in[0,2\pi)$ and $\theta_2\in[0,\pi)$
such that
\begin{equation*}
g=u_{\theta_1}\;\!a_x\;\! u_{\theta_2}
\end{equation*}
and the corresponding expression for the Haar measure is
\begin{equation*}
\d g=\frac{1}{2}\sinh(2x)\;\!\d\theta_1\;\!\d x \;\!\d\theta_2.
\end{equation*}
We refer to \cite[Sec.~VII.2]{Lan} and to \cite[Sec.~6.2 \&~6.5]{V} for the details. 
Then, we have the following inner orthogonal sum decomposition of $\HH$:
\begin{equation}\label{decomposition of L^2(G)}
\HH= \bigoplus_{m,n\in\Z, m-n\in2\Z}\HH_{m,n}.
\end{equation}
One easily observes that elements in $\HH_{m,n}$ are uniquely determined by their values on $A_+:=\{a_x\mid x>0\}$, and therefore we identify $\HH_{m,n}$ with $L^2(\R_+,\d x)$ through the unitary map $\U_{m,n}:\HH_{m,n}\to L^2(\R_+,\d x)$ defined by
\begin{align*}
[\U_{m,n}f](x)&:=\left(\frac{\sinh(2x)}{2}\right)^{\frac{1}{2}}f_{m,n}(a_x)
\end{align*}
for $f\in \HH_{m,n}$ and a.e. $x>0$, with
\begin{equation*}
f_{m,n}(g):=\frac{1}{2\pi}\int_{[0,2\pi)^2}\e^{-i(m\theta_1+n\theta_2)}
f(u_{\theta_1}\;\! g\;\!u_{\theta_2})\;\!
\d \theta_1\d \theta_2,\qquad \text{for a.e.~}g\in G. 
\end{equation*}
Note that if $f\in C_{\rm c}^\infty(G)$, then $f_{m,n}$ is a function of type $(m,n)$. In addition, the function $x\mapsto f_{m,n}(a_x)$ belongs to $C_{\rm c}^\infty\left(\R_+\right)$, and there exists $c\in\C$ such that
\begin{equation}\label{eq_asymp_P_m,n}
[\U_{m,n}f](x)=cx^{\frac{1}{2}}+o\left(x^{\frac{1}{2}}\right)\qquad\text{as}\quad x\searrow 0.
\end{equation}

Let us now recall that the (normalized) Casimir operator $\Omega$ of $\SL(2,\R)$ is an element in the universal enveloping algebra of the Lie algebra $\mathfrak{sl}(2,\R)$ defined by
\begin{equation} 
\Omega=H^2+2XY+2YX+ 1
\end{equation}
where 
\begin{equation}
 X=\begin{pmatrix} 0 & 1 \\ 0 & 0\end{pmatrix},\quad Y=\begin{pmatrix} 0 & 0 \\ 1 & 0\end{pmatrix}\quad H=\begin{pmatrix} 1 & 0 \\ 0 & -1\end{pmatrix},
\end{equation}
see for example \cite[Sec.~5.1]{V}.
Since $\Omega$ belongs to the centre of the universal enveloping algebra, $\Omega$ can be realized as an essentially self-adjoint second order differential operator on $\HH=L^2(G)$ with domain
\begin{equation}
\HH^{\infty}:=\left\{ f\in \HH\mid G\ni g\mapsto f(g^{-1}\ \cdot)\in\HH\ \text{is strongly smooth}\right\}
\end{equation}
by passing through the differential representation of the left regular representation, 
see \cite[Sec.~4.4.1 \& 4.4.4]{War}.
Note that $C_{\rm c}^{\infty}(G)$ is a subspace of $\HH^{\infty}$.

Let us set $\H$ for the self-adjoint extension of $-\Omega$. 
In the coordinate $(\theta_1,x,\theta_2)$ associated with the Cartan decomposition, $\H$ is expressed as
$$
-\frac{\partial^2}{\partial x^2}-\left[\frac{1}{\sinh(2x)^2}\left(\frac{\partial^2}{\partial \theta_1^2}+\frac{\partial^2}{\partial \theta_2^2}\right)-2\frac{\cosh(2x)}{\sinh(2x)^2} \frac{\partial^2}{\partial\theta_1\partial\theta_2}\right]-2\frac{\cosh(2x)}{\sinh(2x)}\frac{\partial}{\partial x} -1,
$$
see \cite[Lem.~26, p190]{V}.
Then, the decomposition \eqref{decomposition of L^2(G)} reduces the operator $\H$ and we have
\begin{equation}
\H\cong\bigoplus_{m,n\in\Z,~ m-n\in 2\Z}\H_{m,n}.
\end{equation}
By using the unitary transform $\U_{m,n}$ one finally gets on $C_{\rm c}^{\infty}(\R_+)$ 
\begin{equation}
\U_{m,n} \;\!\H_{m,n}\;\! \U_{m,n}^* = -\frac{\d^2}{\d x^2}+V_{m,n}(X), 
\end{equation}
where $V_{m,n}$ is given for any $x\in \R_+$ by
$$
V_{m,n}(x)=\frac{m^2+n^2-1-2mn\cosh(2x)}{\sinh(2x)^2}.
$$
By setting $\mu:=\frac{|m-n|}{2}$ and $\nu:=\frac{|m+n|}{2}$, and using hyperbolic trigonometric identities one gets the operator $D_{\mu,\nu}$ on $C_{\rm c}^\infty(\R_+)$.

\begin{remark}\label{rem:def}
Because of the above construction, let us observe that we could consider the operators $D_{\mu,\nu}$ with $\mu,\nu \in \NN$ only. However, for the analysis performed in this work, considering $\mu,\nu \in [0,\infty)$ does not make the investigations more complicated. 
\end{remark}

Let us finally complement the information provided in Section \ref{sec_model}.
Recall that the operator $D_\mu$ has been introduced in \eqref{eq:Vmu}, with the domain $\dom(D_\mu) = C_{\rm c}^\infty(\R_+)$.
In addition, for any $\mu\geq 0$, self-adjoint realizations of $D_\mu$ and of $D_{\mu,\nu}$
have also been provided in Section \ref{sec_model}.
Now, for $\mu \geq 1$, the operator $D_\mu$ is already essentially self-adjoint, see \cite{DR} for more information on this operator. As a consequence of the relation between $D_{\mu,\nu}$
and $D_\mu$, the same property holds for $D_{\mu,\nu}$.
Thus, according to Remark \ref{rem:def}, the only remaining tricky question is about the relation between the self-adjoint operator $\H_{m,m}$ and the self-adjoint operator $H_{0,|m|}$ in $L^2(\R_+)$. The next statement
shows that the self-adjoint extension for $D_0$ (and therefore for $D_{0,|m|}$) introduced in
Section \ref{sec_model} is the correct one.
For that purpose, we recall from \cite[Sec.~2.3]{DR} that all self-adjoint extensions of $D_{0}$ are given by $H_{0}$ and by the following one parameter family of self-adjoint opeators $\{H_{0}^{\kappa}\}_{\kappa\in\R}$ with
\begin{equation}\label{eq_dom_kappa}
\dom(H_{0}^{\kappa}):=\left\{ f\in\dom(\sD_{0}^{\max})\mid \exists c\in\C\ \text{s.t.}\ f(x)-cx^{\frac{1}{2}}\left(\kappa+\ln(x)\right)\in\dom (\sD_{0}^{\min})\ \text{near}\ 0 \right\}.
\end{equation}

\begin{proposition}
For any $m\in \Z$, one has 
$$
\U_{m,m}\;\!\dom(\H_{m,m})=\dom({H_{0,|m|}}).
$$
\end{proposition}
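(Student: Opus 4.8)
The plan is to show that the self-adjoint operator $\U_{m,m}\,\H_{m,m}\,\U_{m,m}^*$ coincides with $H_{0,|m|}$ by identifying, among the possible self-adjoint extensions, the one singled out by the \emph{absence} of a logarithmic term in the near-$0$ behaviour of its domain elements. First, for the $(m,m)$-sector the reduction of Appendix \ref{sec_reduction} gives $\mu=\frac{|m-m|}{2}=0$ and $\nu=\frac{|m+m|}{2}=|m|$, so that $\U_{m,m}\,\H_{m,m}\,\U_{m,m}^*$ is a self-adjoint extension of the differential expression $D_{0,|m|}$ initially defined on $C_{\rm c}^\infty(\R_+)$. Since $V_{0,|m|}-V_0$ is a bounded function (as recalled in Section \ref{sec_model}), subtracting the corresponding multiplication operator produces a self-adjoint extension of $D_0$; by \eqref{eq_dom} and \eqref{eq_dom_kappa} the only such extensions are $H_0$ and the one-parameter family $\{H_0^\kappa\}_{\kappa\in\R}$, and the associated extensions of $D_{0,|m|}$ are $H_{0,|m|}$ and $\{H_{0,|m|}^\kappa\}_{\kappa\in\R}$, with the same domains. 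It therefore suffices to exclude every $H_{0,|m|}^\kappa$.

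To do so I would exhibit a single element of $\dom(\H_{m,m})$ whose image under $\U_{m,m}$ has pure $x^{1/2}$ leading behaviour. Choose $f\in C_{\rm c}^\infty(G)$ with $f_{m,m}(e)\neq 0$, where $e$ denotes the identity of $G$; such a choice is possible because $f\mapsto f_{m,m}(e)$ does not vanish identically on $C_{\rm c}^\infty(G)$. Since $K$ is compact, the averaging defining $f_{m,m}$ preserves smoothness and compact support, so $f_{m,m}\in C_{\rm c}^\infty(G)\cap\HH_{m,m}$. As $C_{\rm c}^\infty(G)\subset\HH^\infty\subset\dom(\H)$ and $\H$ is reduced by the decomposition \eqref{decomposition of L^2(G)}, one has $\dom(\H_{m,m})=\dom(\H)\cap\HH_{m,m}$, whence $f_{m,m}\in\dom(\H_{m,m})$ and $\U_{m,m}f_{m,m}\in\U_{m,m}\dom(\H_{m,m})$. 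By \eqref{eq_asymp_P_m,n} this function satisfies $[\U_{m,m}f](x)=c\,x^{1/2}+o(x^{1/2})$ as $x\searrow 0$ with $c=f_{m,m}(e)\neq 0$; in particular it carries no $x^{1/2}\ln(x)$ contribution.

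Finally I would compare this asymptotic with the two domain descriptions. A function in $\dom(H_{0,|m|}^\kappa)=\dom(H_0^\kappa)$ behaves near $0$ as $c'\,x^{1/2}\big(\kappa+\ln(x)\big)$ modulo $\dom(D_0^{\min})$, so if its leading coefficient $c'$ is nonzero it necessarily contains a genuine $x^{1/2}\ln(x)$ term, while if $c'=0$ it lies in $\dom(D_0^{\min})$ and is $o(x^{1/2})$. In either case it cannot match $c\,x^{1/2}+o(x^{1/2})$ with $c\neq 0$, so $\U_{m,m}f_{m,m}\notin\dom(H_{0,|m|}^\kappa)$ for every $\kappa\in\R$. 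Consequently the self-adjoint extension $\U_{m,m}\,\H_{m,m}\,\U_{m,m}^*$ is none of the $H_{0,|m|}^\kappa$ and must equal $H_{0,|m|}$, which gives the claimed equality of domains. The main obstacle is the bookkeeping ensuring that the chosen smooth vector genuinely lies in $\dom(\H_{m,m})$ — that is, that $C_{\rm c}^\infty(G)$ sits inside the domain of the self-adjoint Casimir and is compatible with the reduction — together with the fine near-$0$ analysis of $\dom(D_0^{\min})$ from \cite{DR} guaranteeing that the $x^{1/2}$ and $x^{1/2}\ln(x)$ coefficients are well defined and that ``no log term'' truly separates $H_0$ from the family $\{H_0^\kappa\}_{\kappa\in\R}$.
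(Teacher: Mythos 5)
Your argument is correct and follows the same overall strategy as the paper: both proofs reduce the statement to exhibiting a single element of $\dom(\H_{m,m})$ whose image under $\U_{m,m}$ satisfies \eqref{eq_asymp_P_m,n} with $c\neq 0$, which is incompatible with the logarithmic boundary behaviour prescribed in \eqref{eq_dom_kappa} for every $H_{0}^{\kappa}$ (a nonzero coefficient in front of $x^{1/2}(\kappa+\ln x)$ would produce a genuine $x^{1/2}\ln x$ term, while a zero coefficient forces the function into $\dom(D_0^{\min})$, hence $o(x^{1/2})$), so the extension must be $H_{0,|m|}$. The genuine difference lies in the choice of witness. You take the $K\times K$-average $f_{m,m}$ of a bump function $f\in C_{\rm c}^{\infty}(G)$ chosen so that $f_{m,m}(e)=\int_0^{2\pi}\e^{-im\phi}f(u_\phi)\,\d\phi\neq 0$, and use that $f_{m,m}\in C_{\rm c}^{\infty}(G)\cap\HH_{m,m}\subset\dom(\H)\cap\HH_{m,m}=\dom(\H_{m,m})$; the paper instead takes $f=\chi_0\;\!u_m(\cdot,\xi)$, the product of a bi-$K$-invariant cutoff with the analytic matrix coefficient $u_m(\cdot,\xi)$ of an irreducible unitary representation, which is likewise a type-$(m,m)$ element of $C_{\rm c}^{\infty}(G)$ with value $1$ at the identity. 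Your construction is more elementary and bypasses the representation theory entirely; the paper's choice buys an explicit, analytic (approximate) eigenfunction for which the nonvanishing at the identity and the regularity across $K$ are immediate, whereas you should at least record why $f\mapsto f_{m,m}(e)$ is not identically zero (e.g.\ pick $f$ with $f(u_\phi)=\e^{im\phi}$ on $K$). Both versions rest on the same two external inputs, which you correctly identify: the asymptotics \eqref{eq_asymp_P_m,n} and the fact from \cite{DR} that elements of $\dom(D_0^{\min})$ are $o(x^{1/2})$ near $0$.
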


\begin{proof}
For proving the statement, we shall show that there exists $f\in \dom (\H_{m,m})$ satisfying \eqref{eq_asymp_P_m,n} with $c\neq 0$. In particular, this rules out any self-adjoint extension of $D_0$ of the form \eqref{eq_dom_kappa}.

By an abuse of notation, we keep writing $\Omega$ for the extension of $\Omega\restriction_{C_{\rm c}^{\infty}(G)}$ to the space $C^{\infty}(G)$.
Then, all eigenfunctions of $\Omega$ are obtained as matrix elements of irreducible unitary representations of $G$.
Indeed, following \cite[Sec.~8.1]{V} let $\pi$ be a irreducible unitary representation of $G$ with infinitesimal character $\chi_{\pi}$, and let $(e_n)_{n\in\Z}$ be an orthonormal basis of the space of $\pi$ with $\pi(u_{\theta})e_n=\e^{in\theta}e_n$. 
 If $\chi_{\pi}(\Omega)=-\xi^2$, then
the function $u_{m,n}(\cdot,\xi): x\mapsto \langle e_m,\pi(x)e_n\rangle$ satisfies
$\Omega\;\! u_{m,n}(\cdot,\xi)=-\xi^2\;\! u_{m,n}(\cdot,\xi)$.
Moreover, $u_{m}(\cdot,\xi):=u_{m,m}(\cdot,\xi)$ is the unique eigenfunction of $\Omega$ satisfying \eqref{app_type_mn_function} with $m=n$ and $u_{m}({\bf 1}_2,\xi)=1$.
Therefore, any eigenfunction of type $(m,m)$ with eigenvalue $-\xi^2$ is proportional to $u_{m}(\cdot,\xi)$, see \cite[Thm.~8.2.3]{V}.
Note that an explicit expression for the eigenfunction $u_{m}(\cdot,\xi)$ can be found for example in \cite[Eq.~(2.19)]{T} and that this function is analytic over $G$.

Let us now consider a function $\chi\in C_{\rm c}^{\infty}(\R)$ satisfying $\chi(0)=1$, 
and define a spherical function $\chi_0\in C_{\rm c}^{\infty}(G)$  of type $(0,0)$ by 
$\chi_0(a_x):=\chi(x)$ for $x>0$ (which defines $\chi_0$ uniquely).
Then, for any fixed $\xi\in \C$ with $\Im(\xi)\neq 0$ one obtains an element $f$ in $\dom(\H_{m,m})$ by setting $f:=\chi_0 \;\! u_{m}(\cdot,\xi)$.
One observes that $f$ satisfies \eqref{eq_asymp_P_m,n} with $c=2\pi$.
This finishes the proof.
\end{proof}

\section{Derivation of the hypergeometric equation}\label{sec_app_A}

Let $u$ be a solution of \eqref{schrodinger} and set $u(x)=z^{\frac{1}{4}+\frac{\mu}{2}}(1-z)^{\frac{\zeta}{2}}v(z)=:F(z)$ for $z=\tanh(x)^{2}$. Note first that
\begin{align*}
V_{\mu,\nu}(x) & =\left(\mu^2-\frac{1}{4}\right)z^{-1}(1-z)^2+(\mu^2-\nu^2)(1-z)=: (1-z)\widetilde{V_{\mu,\nu}}(z), \\
\frac{\d z}{\d x} & =2z^{\frac{1}{2}}(1-z), \\
\frac{\d^2 z}{\d x^2} & =2(1-3z)(1-z), \\
\frac{\d}{\d z}\left(z^{\rho}(1-z)^{\frac{\zeta}{2}}\right) & =\left(\rho z^{-1}-\frac{\zeta}{2}(1-z)^{-1}\right)z^{\rho}(1-z)^{\frac{\zeta}{2}}, \\
\frac{\d^2}{\d z^2}\left(z^{\rho}(1-z)^{\frac{\zeta}{2}}\right) & =\left\{\rho(\rho-1)z^{-2}-\rho\zeta z^{-1}(1-z)^{-1}+\frac{\zeta}{2}\left(\frac{\zeta}{2}-1\right)(1-z)^{-2}\right\}z^{\rho}(1-z)^{\frac{\zeta}{2}}.
\end{align*}
where $\rho=1/4+\mu/2$. We then obtain that
\begin{align*}
u''(x)&=\frac{\d^2 F}{\d z^2}(z)\left(\frac{\d z}{\d x}\right)^2+\frac{\d F}{\d z}(z)\frac{\d^2 z}{\d^2 x}\\
&=\left\{\left(z^{\rho}(1-z)^{\frac{\zeta}{2}}\right)v''(z)+2\frac{\d}{\d z}\left(z^{\rho}(1-z)^{\frac{\zeta}{2}}\right)v'(z)+\frac{\d^2}{\d^2 z}\left(z^{\rho}(1-z)^{-\frac{\zeta}{2}}\right)v(z)\right\}\left(\frac{\d z}{\d x}\right)^2\\
&\qquad +\left\{\left(z^{\rho}(1-z)^{\frac{\zeta}{2}}\right)v'(z)+\frac{\d}{\d z}\Big(z^{\rho}(1-z)^{\frac{\zeta}{2}}\Big)v(z)\right\}\frac{\d^2 z}{\d^2 x}\\
&=\Biggl[z(1-z)v''(z)+2\Big(\rho(1-z)-\frac{\zeta}{2}z\Big)v'(z)\\
&\qquad +\Big(\rho(\rho-1)z^{-1}(1-z)-\rho\zeta +\frac{\zeta}{2}\Big(\frac{\zeta}{2}-1\Big)z(1-z)^{-1}\Big)v(z)\\
&\qquad+ \frac{1}{2}(1-3z)v'(z)+\frac{1}{2}\Big(\rho z^{-1}-\frac{\zeta}{2}(1-z)^{-1}\Big)(1-3z)v(z) \Biggl]\times 4z^{\rho}(1-z)^{\frac{\zeta}{2}+1}.
\end{align*}
By multiplying both sides of \eqref{schrodinger} by $\left(4z^{\rho}(1-z)^{\frac{\zeta}{2}+1}\right)^{-1}$ we deduce that
\begin{align}\label{hypergeometric2}
&\notag z(1-z)v''(z)+\Big[2\Big(\rho(1-z)-\frac{\zeta}{2}z\Big)+\frac{1}{2}(1-3z)\Big]v'(z)\\
&\notag+\Biggl[\Big(\rho(\rho-1)z^{-1}(1-z)-\rho\zeta +\frac{\zeta}{2}\Big(\frac{\zeta}{2}-1\Big)z(1-z)^{-1}\Big)\\
&\qquad + \frac{1}{2}\Big(\rho z^{-1}-\frac{\zeta}{2}(1-z)^{-1}\Big)(1-3z)-\frac{1}{4}\left(\widetilde{V_{\mu,\nu}}(z)+\zeta^2(1-z)^{-1}\right)\Biggl]v(z)=0.
\end{align}
As the coefficient of the $v'(z)$ in \eqref{hypergeometric2}, we obtain
\begin{align*}
2\Big(\rho(1-z)-\frac{\zeta}{2}z\Big)+\frac{1}{2}(1-3z)&=1+\mu-\big(1+(\alpha+\zeta/2)+(\beta+\zeta/2)\big)z.
\end{align*}
Now, on the one hand we have 
\begin{align*}
(\alpha+\zeta/2)(\beta+\zeta/2)&=\left(\frac{1}{2}+\frac{\mu}{2}+\frac{\zeta}{2}\right)^2-\frac{\nu^2}{4}\\
&=\frac{1}{4}+\frac{\mu^2}{4}+\frac{\zeta^2}{4}+\frac{\mu}{2}+\frac{\zeta}{2}+\frac{\mu\zeta}{2}-\frac{\nu^2}{4}\\
&=\rho+\frac{\zeta^2}{4}+\frac{\zeta}{2}+\frac{\mu\zeta}{2}+\frac{\mu^2-\nu^2}{4}\\
&=\rho+\frac{\zeta^2}{4}+\frac{\zeta}{4}+\rho\zeta+\frac{\mu^2-\nu^2}{4}.
\end{align*}
On the other hand, the coefficient of $v(z)$ in \eqref{hypergeometric2} is
\begin{align*}
&\rho(\rho-1)z^{-1}(1-z)-\rho\zeta +\frac{\zeta}{2}\Big(\frac{\zeta}{2}-1\Big)z(1-z)^{-1}\\
&\qquad +\frac{1}{2}\Big(\rho z^{-1}-\frac{\zeta}{2}(1-z)^{-1}\Big)(1-3z)-\frac{1}{4}\left(\widetilde{V_{\mu,\nu}}(z)+\zeta^2(1-z)^{-1}\right)\\
&=\rho(\rho-1)z^{-1}(1-z)-\rho\zeta +\frac{\zeta}{2}\Big(\frac{\zeta}{2}-1\Big)z(1-z)^{-1}\\
&\qquad +\frac{\rho}{2}z^{-1}(1-z)-\frac{\zeta}{4}-\rho+\frac{\zeta}{2} z(1-z)^{-1}\\
&\qquad -\Big(\mu^2-\frac{1}{4}\Big)\frac{z^{-1}(1-z)}{4}-\frac{\mu^2-\nu^2}{4}-\frac{1}{4}\zeta^2(1-z)^{-1}\\
&=\Big\{\rho(\rho-1)+\frac{\rho}{2}-\frac{1}{4}\Big(\mu^2-\frac{1}{4}\Big)\Big\}z^{-1}(1-z)+\Big[\frac{\zeta}{2}\Big(\frac{\zeta}{2}-1\Big)+\frac{\zeta}{2}\Big]z(1-z)^{-1}\\
&\qquad-\frac{1}{4}\zeta^2(1-z)^{-1}-\left[\rho\zeta+\frac{\zeta}{4}+\rho+\frac{\mu^2-\nu^2}{4}\right]\\
&=\Big\{\frac{\zeta}{2}\Big(\frac{\zeta}{2}-1\Big)+\frac{\zeta}{2}-\frac{\zeta^2}{4}\Big\}(1-z)^{-1}-\Big[\frac{\zeta}{2}\Big(\frac{\zeta}{2}-1\Big)+\frac{\zeta}{2}+\rho\zeta+\frac{\zeta}{4}+\rho+\frac{\mu^2-\nu^2}{4}\Big]\\
&=-\Big[\frac{\zeta^2}{4}+\rho\zeta+\frac{\zeta}{4}+\rho+\frac{\mu^2-\nu^2}{4}\Big]\\
&=-(\alpha+\zeta/2)(\beta+\zeta/2),
\end{align*}
where the two coefficients into curly brackets are equal to $0$.
This computation leads directly to the equation \eqref{hypergeometric}.

\end{document}